\documentclass[12pt,letterpaper]{article}
\pdfoutput=1

\usepackage{amsmath,amssymb, amsthm}
\usepackage{bbm}
\usepackage{calc, comment}
\usepackage{dsfont}
\usepackage{enumerate, enumitem, epsfig}
\usepackage{float}
\usepackage{graphicx}
\usepackage[linktocpage=true]{hyperref}
\usepackage[utf8]{inputenc} 
\usepackage{mathrsfs, mathtools}
\usepackage[numbers,sort&compress]{natbib}
\usepackage{physics, psfrag}
\usepackage{soul}
\usepackage{tikz}
\usepackage{tikz-cd}
\usepackage{tensor}
\usepackage[dvipsnames]{xcolor}

\newtheorem{theorem}{Theorem}
\newtheorem{lemma}{Lemma}
\newtheorem{corollary}{Corollary}

\definecolor{azure}{rgb}{0.0, 0.5, 1.0}
\definecolor{darkblue}{rgb}{0.15,0.35,0.7}
\definecolor{reddish}{rgb}{0.65, 0.2, 0.2}
\definecolor{brandeisblue}{rgb}{0.0, 0.44, 1.0}
\definecolor{ceruleanblue}{rgb}{0.16, 0.32, 0.75}
\definecolor{indigo(dye)}{rgb}{0.0, 0.25, 0.42}

\hypersetup{
colorlinks=true,
citecolor=ceruleanblue,
linkcolor=ceruleanblue,
urlcolor=ceruleanblue,
pdfauthor={},
pdftitle={},
pdfsubject={}
}

\DeclareFontEncoding{LS1}{}{}
\DeclareFontSubstitution{LS1}{stix}{m}{n}
\DeclareSymbolFont{stixsymbols}{LS1}{stixscr}{m}{n}
\SetSymbolFont{stixsymbols}{bold}{LS1}{stixscr}{b}{n}
\DeclareMathSymbol{\kay}{\mathalpha}{stixsymbols}{"6B}
\DeclareMathSymbol{\hay}{\mathalpha}{stixsymbols}{"68}

\makeatletter
\renewcommand\section{\@startsection {section}{1}{\z@}%
                               {-3.5ex \@plus -1ex \@minus -.2ex}%nn
                               {2.3ex \@plus.2ex}%
                               {\normalfont\large\bfseries}}
\renewcommand\subsection{\@startsection{subsection}{2}{\z@}%
                                 {-3.25ex\@plus -1ex \@minus -.2ex}%
                                 {1.5ex \@plus .2ex}%
                                 {\normalfont\bfseries}}
\makeatother

\makeatletter
\newcommand*\bigcdot{\mathpalette\bigcdot@{.5}}
\newcommand*\bigcdot@[2]{\mathbin{\vcenter{\hbox{\scalebox{#2}{$\m@th#1\bullet$}}}}}
\makeatother

\newfont{\goth}{ygoth.tfm scaled 1200}                   % gothic font (usual)
\numberwithin{equation}{section}

\allowdisplaybreaks
\begin{document}
%%%%%%%%%%%%%%%%
%%%%%%%%%%%%%%%%
\begin{titlepage}
\begin{flushright}
\today
\end{flushright}
\vspace{5mm}

\begin{center}
{\Large \bf 
Neural Network Field Theory at Finite Width}
\end{center}

\begin{center}

{\bf
Christian Ferko${}^{a, b}$ and
Aaron Mutchler${}^{a}$
} \\
\vspace{5mm}

\footnotesize{
${}^{a}$
{\it 
Department of Physics, Northeastern University, Boston, MA 02115, USA
}
 \\~\\
${}^{b}$
{\it 
The NSF Institute for Artificial Intelligence
and Fundamental Interactions
}
}
\vspace{4mm}
~\\
\texttt{c.ferko@northeastern.edu,
mutchler.a@northeastern.edu
}\\
\vspace{2mm}

\end{center}

\begin{abstract}
\baselineskip=14pt

Under mild assumptions, any quantum mechanical (QM) model or quantum field theory (QFT) admits a representation in terms of an ensemble of neural networks with countably many random parameters.  We investigate the features of NN-QM and NN-FT models with finitely many parameters, such as a feedforward network of width $N < \infty$. We find that, generically, such models must violate one of the properties of conventional Euclidean QFTs, such as reflection positivity or cluster decomposition. We present several complementary ways of understanding which features can and cannot be preserved at finite $N$, both in QM and in QFT.

\end{abstract}
\vspace{5mm}

\vfill
\end{titlepage}

%\tableofcontents

%\newpage
\renewcommand{\thefootnote}{\arabic{footnote}}
\setcounter{footnote}{0}

\tableofcontents{}
% \vspace{1cm}
\bigskip
\hrule

% \newpage

\section{Introduction}\label{sec:intro}

In many cases, the Euclidean path integral can be rigorously defined and provides a measure over the space of field configurations.
In such theories, correlation functions can then be formally viewed as certain expectation values taken with respect to this measure:
\begin{align}\label{path_integral_correlator}
    \langle \phi ( x_1 ) \ldots \phi ( x_n ) \rangle = \int \mathcal{D} \phi \, \phi ( x_1 ) \ldots \phi ( x_n ) = \mathbb{E} \left[ \phi ( x_1 ) \ldots \phi ( x_n ) \right] \, .
\end{align}
The idea of neural network field theory (NN-FT) \cite{Halverson:2020trp,Halverson:2021aot} is to ask whether this measure over field configurations in a physically realistic QFT can be obtained from an ordinary probability distribution over real variables called \emph{parameters}, which we indicate with the variable $\theta$. Given a draw of these real-valued parameters, one then obtains a field configuration $\phi_\theta ( x )$ by inserting the values of $\theta$ into a prescribed function known as the \emph{architecture}. With a probability distribution over parameters $ P ( \theta ) $, known as the \emph{parameter density}, the combined data $ \left( \phi_{ \theta } , P ( \theta ) \right)$ induces a measure over field configurations by pushing forward the density $P ( \theta )$ using the architecture $\phi_\theta$. Here and in the remainder of this work, we use a more general notion of ``neural network'' which refers to any such parameterized family of random functions $\phi_\theta ( x )$; when restricting to familiar architectures that are typically employed in machine learning, we will often adopt more specific terminology and speak of ``conventional'' or ``realistic'' architectures.

If a given quantum field theory has correlation functions (\ref{path_integral_correlator}) which agree with those computed from the neural network expectation values, i.e.
\begin{align}
    \langle \phi ( x_1 ) \ldots \phi ( x_n ) \rangle = \int d \theta \, P ( \theta ) \phi_\theta ( x_1 ) \ldots \phi_\theta ( x_n )
\end{align}
for all $n$ and all $x_i$, then we say that the data $( \phi_\theta , P ( \theta ) )$ provides a neural network realization of the quantum field theory.

Broad existence results about such neural network realizations are known. Under minimal assumptions, every one-dimensional quantum field theory -- which is simply a theory of quantum mechanics, for instance describing trajectories $x(t)$ of a quantum particle -- admits a neural network realization with a countable infinity of parameters \cite{Ferko:2025ogz}. This result was generalized in \cite{Ferko:2026axm}, where it was shown that any $d$-dimensional quantum field theory associated with a non-negative measure over field configurations likewise admits a neural network realization with a countable infinity of parameters.

The existence results above place neural networks with a countable infinity of random parameters at the foundation of NN-QM and NN-FT. Although these results establish the existence of neural network realizations under mild assumptions, they leave open a natural question: what happens when the countably infinite construction is replaced by a network with finitely many parameters? In particular, which properties of the underlying quantum theory can be realized exactly, which survive approximately, and which must be relaxed? We take this finite-$N$ question as the organizing perspective of the present work.

There is a subtlety, due to the Borel isomorphism theorem, in formulating this question, since the number of random parameters alone does not distinguish a genuinely finite network from an infinite construction. For the standard Borel spaces of paths or field configurations relevant here, any probability law can formally be obtained as the measurable pushforward of a single parameter $\theta \sim U ( [ 0 , 1 ] )$ through some architecture \cite{Ferko:2026axm}. Thus this one-parameter construction is compatible with the definition of an (exact) NN-QM or NN-FT realization given above. This observation was part of our motivation for the present investigation: it forces one to ask what additional structure is intended when such a realization is called finite. For a generic continuum theory, the single-parameter realization above encodes an unbounded amount of information in the infinitely many digits of $\theta$, and thus requires infinitely many operations to evaluate. A finite parameter count therefore need not describe a finite network in any computational or practical sense.

It is therefore important to distinguish architecture-agnostic statements from conclusions about particular classes of architectures. Some of the arguments in this article, as well as the related work of \cite{thomas}, are agnostic as to the specific functional form of the architecture, so long as it satisfies certain stated hypotheses. Much of our analysis, however, concerns natural finite-width architectures which cannot hide an infinite computation in a finite collection of parameters. Our representative example is the single-layer architecture
\begin{align}\label{single_layer}
    \phi_\theta ( x ) = \sum_{i=1}^{N} w_i \sigma_i \left( a_i \cdot x + b_i \right) \, ,
\end{align}
with finitely many parameters $\theta \in \{ w_i, a_i, b_i \}$. We restrict to a single layer for simplicity, although analogous questions may be posed for deep networks, and we allow the neurons $\sigma_i$ to vary for generality. Networks of the form (\ref{single_layer}) belong to the familiar universal-approximation class of Cybenko's theorem as the width is allowed to grow \cite{cybenko1989approximation}. However, we stress that this approximation for \emph{fixed functions} does not imply that such an architecture with random parameters can correctly reproduce all correlation functions of a quantum theory at any fixed finite $N$.

Our aim is therefore to characterize which features of conventional quantum theories can be reproduced by finite-$N$ architectures, and how the answer depends on the architecture and physical assumptions. One might immediately suspect that finite-$N$ architectures must sacrifice some of the typical properties of quantum systems, owing to the classes of objects on which typical path integral measures are supported. Here it is helpful to separate the case of $1d$ quantum field theories (quantum mechanics) from higher dimensional field theories.

\begin{enumerate}[label = (\roman*)]
    \item\label{qm_intuitive} \textbf{Quantum mechanics}. For $d = 1$, the standard path integral measure is supported on paths which are continuous everywhere but differentiable nowhere \cite{FeynmanHibbs1965,Simon2005Functional}. For instance, the path integral for the free quantum particle is supported entirely on Brownian paths. A finite sum of the form (\ref{single_layer}), where each of the activation functions $\sigma_i$ is piecewise differentiable with finitely many points of non-differentiability (e.g. the ReLU function) can never generate a path which is nowhere-differentiable.

    \item\label{qft_intuitive} \textbf{Quantum field theory}. In dimensions $d \geq 2$, the path integral measure is supported on proper Schwartz distributions, i.e. Schwartz distributions which are not themselves functions.\footnote{This was proven, for instance, for the massive free field \cite{ColellaLanford1973}, for a general class of $P(\phi)_2$ theories \cite{FrohlichSimon1977}, and for the sine--Gordon model on $\mathbb{T}^2$ with $0<\beta^2<4\pi$ \cite{OhRobertSosoeWang2021}.} Said differently, a typical field configuration $\phi ( x )$ in quantum field theory is not a function that can be evaluated at a point, but it can be integrated against appropriate test functions, much like the Dirac delta distribution. However, a finite sum (\ref{single_layer}) where each of the $\sigma_i$ is an ordinary function must also be a function and not a proper Schwartz distribution. 
\end{enumerate}

It turns out that the intuition of \ref{qm_intuitive} and \ref{qft_intuitive} is correct. However, we find it instructive to formalize these results for several reasons. First, it is useful to see which features are exhibited by finite-width approximations to a quantum mechanical model or quantum field theory. We will see that generically such models do not obey all of the Osterwalder-Schrader (OS) axioms \cite{Osterwalder:1973dx,Osterwalder:1974tc} -- in particular, reflection positivity (RP) and cluster decomposition play key roles -- but it is worthwhile to relax some implications of the OS axioms to uncover what is possible at finite $N$. Second, it is interesting to consider generalizations of architectures like (\ref{single_layer}) to see whether they can realize other physical properties. For instance, one might generalize (\ref{single_layer}) by allowing the activation functions $\sigma_i$ to be nowhere-differentiable but square-integrable (in quantum mechanics) or by allowing the $\sigma_i$ to be proper Schwartz distributions (in QFT), but still only retaining finitely many terms in the sum. It turns out that neither of these generalizations is sufficient: a finite-width network with nowhere-differentiable or distributional neurons still cannot reproduce all correlation functions of a fully-OS quantum mechanical theory or a quantum field theory, respectively.

This is not to say that finite-width neural network field theories are uninteresting. If one knows the correct infinite-width architecture and parameter density that realize a target QFT, then in principle, truncating this architecture to finite $N$ gives a way to approximate observables using Monte Carlo sampling. In general, one has no guarantees about the statistical properties of this approximation scheme, and we do not consider any such error analysis in this work. Nonetheless, such truncated approximations have been used to perform numerical NN-FT studies of Liouville theory \cite{Ferko:2026axm}, field theories with topological features \cite{Ferko:2026ken} or with Virasoro symmetry \cite{Robinson:2025ybg}, and Maxwell theory \cite{toappear}.

Besides offering a potential method for approximating observables, finite-width NN-FTs exhibit interesting features in their own right; for instance, truncating neural network realizations of free theories to finite $N$ often introduces non-local interactions that scale as $\frac{1}{N}$ \cite{Demirtas:2023fir,Robinson:2025ybg}. This non-locality offers an additional reason to expect that a finite neural network can never reproduce the correlation functions of a conventional (local) quantum field theory. However, although the Edgeworth expansion can be used to study these non-local interactions for finite-$N$ neural network representations of free field theories, to the best of our knowledge there is no general proof that finite-$N$ NN-FTs must be non-local in general. This again motivates the study of properties of finite-$N$ NN-QM and NN-FT models, including the results presented in the current work.

The remainder of this article is organized as follows. In Section \ref{sec:qm}, we recall some generalities about neural network quantum mechanics \cite{Ferko:2025ogz} (see also \cite{Hashimoto:2024aga} for related work) and then provide two complementary proofs that fixed-feature finite-width representations fail to give conventional QM: one which relies on the infinite-rank nature of the covariance kernel, and one which applies the optimality of the Kosambi-Karhunen-Lo\`eve \cite{kosambi1943statistics,karhunen1947,Loeve1948} expansion for stochastic processes. In Section \ref{sec:qft}, we turn to quantum field theory. We first review that the OS axioms of temperedness, Euclidean invariance, and reflection positivity generically imply that two-point functions diverge in coincident-point limits, and we give conditions on a class of finite-$N$ networks which cannot reproduce such divergences. We then discuss some examples of finite-width networks which \emph{can} produce these coincident $2$-point divergences (and RP two-point functions). In Section \ref{sec:momentum_space}, we further assume the cluster property, and show that a broad class of finite-$N$ networks necessarily violate at least one of the OS axioms. Section \ref{subsec:distributional_neurons} asks what is possible for finite-$N$ networks whose neurons are general Schwartz distributions, and the answer is that they still cannot reproduce the K\"all\'en-Lehmann decomposition implied by a subset of the OS axioms. In Section \ref{sec:conclusion}, we summarize our results and present directions for future investigation.

While this paper was being completed, we became aware of the interesting related manuscript \cite{thomas}, which contains complementary results about finite-width NN-FTs, and in particular the limitations of their usage for numerical simulations on a computer.

\section{Neural Network Quantum Mechanics at Finite \texorpdfstring{$N$}{TEXT}}\label{sec:qm}

We first introduce some generalities about neural network quantum mechanics models \cite{Ferko:2025ogz} which will be useful for the later discussion. As we mentioned, a QM model can be viewed as a one-dimensional QFT whose basic degree of freedom is a real-valued trajectory $x ( t )$, such as the position of a quantum particle, and the observables of interest are the Euclidean correlation functions
\begin{align}
    G^{(n)} ( t_1, \ldots, t_n ) = \left\langle x ( t_1 ) \cdots x ( t_n ) \right\rangle \, .
\end{align}
We trust that the use of the symbol $x$, which is here analogous to the field $\phi$ in QFT, will not be confused with its usage as a spacetime coordinate when writing $\phi ( x )$. Here the only spacetime coordinate is the Euclidean time $t$.

One can also view $x ( t )$ as a \emph{stochastic process}, and we sometimes use the notation $x_t$ for $x ( t )$ as is common in that literature. For every $t$ in an index set $T$, the quantity $x ( t )$ is a random variable, and the joint distributions of any finite collection of these random variables determine the process. We will work on a compact interval $T = [ a, b ]$ with $a<b$.\footnote{Any realization on the full Euclidean time axis $t \in \mathbb{R}$ can also be restricted to such an interval, so any obstruction on $T$ is already an obstruction to a global realization.} As in \cite{Ferko:2025ogz}, no path-integral action is assumed; the correlation functions themselves are the defining data.

For our purposes, a neural network is a parameterized family of functions $\phi_\theta : T \to \mathbb{R}$ together with a probability distribution (or law) $P ( \theta )$ for the parameters. A draw of $\theta$ produces a draw of the function $\phi_\theta$, and hence a stochastic process whose correlation functions are
\begin{align}
    G^{(n)} ( t_1, \ldots, t_n )
    = \int P ( d \theta ) \, \phi_\theta ( t_1 ) \cdots \phi_\theta ( t_n ) \, .
\end{align}
If these quantities agree with the correlation functions of a Euclidean quantum system for all $n$ and all choices of the $t_i$, then $( \phi_\theta , P )$ is a neural network realization of that quantum system, or an NN-QM. Said differently, an NN-QM pushes forward the parameter law to an ensemble of trajectories with the desired Euclidean correlation functions.

Not every stochastic process defines a conventional quantum theory. We therefore restrict attention to processes obeying the Osterwalder-Schrader axioms \cite{Osterwalder:1973dx,Osterwalder:1974tc}, together with the mild requirement that the observable $x$ have finite variance.\footnote{Although the finite-variance assumption is natural in $1d$, as we will review, for quantum field theories in $d \geq 2$ the corresponding variance is typically infinite.} It is useful to separate the deterministic one-point function from the fluctuations by defining
\begin{align}
    \mu ( t ) = \langle x ( t ) \rangle \, , \qquad
    \hat{x}_t = x_t - \mu ( t ) \, , \qquad
    C ( t, s ) = \left\langle \hat{x}_t \hat{x}_s \right\rangle \, .
\end{align}
Here $C ( t, s )$ is the connected two-point function, or covariance kernel. Euclidean time-translation invariance makes $\mu ( t )$ constant and $C ( t, s )$ a function of $| t-s |$. Reflection positivity then enters the Osterwalder-Schrader reconstruction theorem, which produces a quantum-mechanical Hilbert space with a non-negative Hamiltonian, after shifting the ground-state energy to zero. Applying the spectral theorem to this Hamiltonian gives the one-dimensional K\"all\'en-Lehmann representation \cite{Kallen:1952zz,Lehmann1954berEV},
\begin{align}\label{kl_representation}
    C ( t, s ) = \int_0^\infty \rho ( d m ) \, e^{-m | t-s |} \, ,
\end{align}
where $\rho$ is a finite, non-negative measure.\footnote{Cluster decomposition forces $\rho ( \{ 0 \} ) = 0$ for the connected covariance. For simplicity, we will thus be cavalier about possible point contributions to the KL representation at $m = 0$, both here and in Section \ref{sec:qft}, although such contributions do not affect our results.} The representation (\ref{kl_representation}) and its higher-dimensional analogue will play a critical role throughout this work. We will call the positive-mass part of this representation non-trivial when
\begin{align}
    \rho \big( ( 0, \infty ) \big) > 0 \, .
\end{align}
This condition excludes a trivial stochastic process whose (centered) fluctuations consist only of a time-independent random mode, which corresponds to a measure supported entirely at $m=0$. A delta-function measure supported at a positive value of $m$, such as the one associated with the quantum harmonic oscillator, is non-trivial in this sense. Both here and in the QFT discussion of Section \ref{sec:qft}, we will ignore possible OS-null contact terms.

The representation (\ref{kl_representation}) implies that the stochastic process $x_t$ obeys the assumptions of the Kosambi-Karhunen-Lo\`{e}ve (KKL) theorem \cite{kosambi1943statistics,karhunen1947,Loeve1948}. The covariance kernel defines an operator $\mathcal{O}$ on $L^2 ( T )$ by
\begin{align}\label{O_operator_defn}
    \left( \mathcal{O} f \right) ( t ) = \int_a^b d s \, C ( t, s ) f ( s ) \, .
\end{align}
The eigenfunctions and eigenvalues of this linear operator furnish us with a neural network representation of the stochastic process $x_t$ with a countable infinity of parameters. This is the content of the following Lemma, which is a variant of the argument presented in \cite{Ferko:2025ogz}.

\begin{lemma}\label{kl_implies_kkl}
    Let $x_t$ be a stochastic process on $T = [ a, b ]$ whose covariance has the form (\ref{kl_representation}) for a finite, non-negative measure $\rho$. Then $x_t$ admits a decomposition
    \begin{align}\label{kkl_decomposition}
        x_t = \mu ( t ) + \sum_{k=1}^{\infty} \theta_k e_k ( t ) \, ,
    \end{align}
    where the $e_k$ are continuous orthonormal eigenfunctions of $\mathcal{O}$ associated with its non-zero eigenvalues $\mathcal{O} e_k = \lambda_k e_k$. We list these eigenvalues in non-increasing order, with each $\lambda_k > 0$, and the $\theta_k$ are pairwise uncorrelated random variables satisfying
    \begin{align}\label{thetas_to_lambdas}
        \langle \theta_k \rangle = 0 \, , \qquad
        \langle \theta_j \theta_k \rangle = \lambda_k \delta_{j k} \, .
    \end{align}
\end{lemma}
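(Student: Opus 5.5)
The plan is to reduce the statement to the classical Kosambi--Karhunen--Lo\`eve theorem. To do that, we need the covariance kernel $C$ to be continuous, symmetric and positive semi-definite on the compact square $T \times T$; Mercer's theorem then provides the eigen-expansion.

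\textbf{Regularity of $C$.} From (\ref{kl_representation}), $C(t,s) = \int_0^\infty \rho(dm)\, e^{-m|t-s|}$ with $\rho$ finite and non-negative. The integrand is bounded by $1$ and continuous in $(t,s)$ for each $m$. Dominated convergence, using $\rho$ finite, therefore shows that $C$ is continuous and bounded by $\rho([0,\infty))$. Symmetry is manifest. Positive semi-definiteness holds because $C$ is a covariance, $C(t,s) = \langle \hat x_t \hat x_s\rangle$, so $\sum_{ij} c_i c_j C(t_i,t_j) = \langle (\sum_i c_i \hat x_{t_i})^2\rangle \geq 0$. Alternatively, each $e^{-m|t-s|}$ is the covariance of an Ornstein--Uhlenbeck process. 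Continuity of $C$ at the diagonal also gives mean-square continuity of $\hat x_t$, since $\langle (\hat x_t - \hat x_s)^2\rangle = 2C(t,t) - 2C(t,s) \to 0$.

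\textbf{Spectral data of $\mathcal{O}$.} With $C$ continuous on the compact set $T\times T$, the operator $\mathcal{O}$ of (\ref{O_operator_defn}) is Hilbert--Schmidt, hence compact, and it is self-adjoint and positive on $L^2(T)$. The spectral theorem gives orthonormal eigenfunctions $e_k$ for its non-zero eigenvalues $\lambda_k>0$, which we list in non-increasing order with finite multiplicities. Each $e_k = \lambda_k^{-1}\mathcal{O}e_k$ is continuous, because $\mathcal{O}$ maps $L^2$ into $C(T)$ by uniform continuity of $C$. Mercer's theorem then yields $C(t,s) = \sum_k \lambda_k e_k(t) e_k(s)$, with the series converging absolutely and uniformly on $T\times T$.

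\textbf{Construction of the $\theta_k$.} Define $\theta_k = \int_a^b \hat x_t\, e_k(t)\, dt$. This integral exists as a mean-square (or pathwise, after choosing a measurable version) integral by mean-square continuity. Fubini, justified by $\int\!\int |C| < \infty$, gives $\langle\theta_k\rangle = 0$ and
\[
\langle\theta_j\theta_k\rangle = \int\!\!\int e_j(t)\, C(t,s)\, e_k(s)\, dt\, ds = \lambda_k\delta_{jk},
\]
which is (\ref{thetas_to_lambdas}). For convergence, a direct computation gives
\[
\Bigl\langle \bigl(\hat x_t - \textstyle\sum_{k\le n}\theta_k e_k(t)\bigr)^2\Bigr\rangle = C(t,t) - \sum_{k\le n}\lambda_k e_k(t)^2 .
\]
By Mercer this tends to zero uniformly in $t$, which establishes (\ref{kkl_decomposition}) in the $L^2(P)$ sense, uniformly on $T$.

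\textbf{Main obstacle.} The delicate points are the measure-theoretic ones. We must justify the integral defining $\theta_k$, which requires joint measurability of $(t,\omega)\mapsto \hat x_t$, obtainable from a measurable version of a mean-square continuous process. We must also be clear that the equality in (\ref{kkl_decomposition}) holds in mean square for each $t$, not pathwise. The Mercer step itself is standard once continuity of $C$ has been secured from the finiteness of $\rho$.
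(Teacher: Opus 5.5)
Your proposal is correct and follows essentially the same route as the paper: continuity of $C$ by dominated convergence, the spectral theorem plus Mercer's theorem for $\mathcal{O}$, projection coefficients $\theta_k=\int_a^b \hat{x}_t\, e_k(t)\,dt$, and the mean-square error $C(t,t)-\sum_{k\le n}\lambda_k e_k(t)^2$, which vanishes by the diagonal of the Mercer expansion. Your remarks on positive semi-definiteness, joint measurability, and uniform convergence in $t$ are minor refinements that the paper leaves implicit.
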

The content of Lemma \ref{kl_implies_kkl} is that every quantum model which admits a K\"all\'en-Lehmann representation also admits a single-layer neural network representation, where we view the $e_k ( t )$ as fixed deterministic neurons or features and $\theta_k$ as the random parameters. When we say that the sum (\ref{kkl_decomposition}) represents the QM model or stochastic process, as above, we mean that it reproduces its correlation functions. Mathematically, one also has that the KKL sum converges in mean square for every $t \in T$, i.e.
\begin{align}
    \lim_{M \to \infty} \left\langle \left| \hat{x}_t - \sum_{k=1}^{M} \theta_k e_k ( t ) \right|^2 \right\rangle = 0 \, .
\end{align}

\begin{proof}
    We first verify the regularity needed for the KKL expansion. Since
    $0 \leq e^{-m | t-s |} \leq 1$ and $\rho$ is finite, the dominated convergence theorem
    shows that $C ( t, s )$ is continuous on $T \times T$. Furthermore,
    \begin{align}
        C ( t, t )
        &= \rho \big( [ 0, \infty ) \big) < \infty \, , \nonumber \\
    \end{align}
    and
    \begin{align}
        \left\langle \left| \hat{x}_t - \hat{x}_s \right|^2 \right\rangle
        &= 2 \int_0^\infty \rho ( d m )
        \left( 1 - e^{-m | t-s |} \right)
        \longrightarrow 0
    \end{align}
    as $t \to s$. Thus the centered process $\hat{x}_t$ is zero-mean,
    square-integrable, and continuous in mean square.

    Because $C$ is a continuous covariance kernel on a compact interval,
    the associated operator $\mathcal{O}$ of (\ref{O_operator_defn}) is a compact, self-adjoint, positive operator on $L^2 ( T )$.
    The spectral theorem therefore gives an at most countable collection of
    non-zero eigenvalues $\lambda_k > 0$, which we list in non-increasing order,
    together with corresponding orthonormal eigenfunctions $e_k$. These
    eigenfunctions have continuous representatives, since the eigenvalue
    equation gives
    \begin{align}
        e_k ( t )
        = \frac{1}{\lambda_k}
        \int_a^b d s \, C ( t, s ) e_k ( s ) \, .
    \end{align}
    Mercer's theorem\footnote{Recall that Mercer's theorem upgrades the spectral decomposition of a continuous, symmetric, positive-semidefinite kernel on a compact domain to a pointwise eigenfunction expansion that converges absolutely and uniformly.} then yields the covariance expansion
    \begin{align}
        C ( t, s )
        = \sum_{k=1}^{\infty}
        \lambda_k e_k ( t ) e_k ( s ) \, .
    \end{align}
    Here and below, the sums run over the non-zero modes and are understood to
    terminate if $\mathcal{O}$ has finite rank.

    We now define the random coefficients by projecting the centered process
    onto these eigenfunctions,
    \begin{align}
        \theta_k
        = \int_a^b d s \, e_k ( s ) \hat{x}_s \, ,
    \end{align}
    where the integral is understood in the mean-square sense.
    
    Orthonormality together with $\mathcal{O} e_k = \lambda_k e_k$ gives
    \begin{align}
        \left\langle \theta_k \right\rangle &= 0 \, ,
    \end{align}
    along with
    \begin{align}
        \left\langle \theta_j \theta_k \right\rangle
        &= \int_a^b d t \int_a^b d s \,
        e_j ( t ) C ( t, s ) e_k ( s ) \nonumber \\
        &= \int_a^b d t \,
        e_j ( t ) \left( \mathcal{O} e_k \right) ( t ) \nonumber \\
        &= \lambda_k \delta_{j k} \, .
    \end{align}
    This proves (\ref{thetas_to_lambdas}). In the same way,
    \begin{align}
        \left\langle \hat{x}_t \theta_k \right\rangle
        = \int_a^b d s \, C ( t, s ) e_k ( s )
        = \lambda_k e_k ( t ) \, .
    \end{align}
    It follows that the mean-square error of the first $M$ modes is
    \begin{align}
        \left\langle
        \left|
        \hat{x}_t - \sum_{k=1}^{M} \theta_k e_k ( t )
        \right|^2
        \right\rangle
        = C ( t, t )
        - \sum_{k=1}^{M} \lambda_k e_k ( t )^2
        \longrightarrow 0 \, ,
    \end{align}
    where the final step follows from the diagonal part of the Mercer
    expansion. Thus the sum converges to $\hat{x}_t$ in mean square for every
    $t \in T$. Adding back the deterministic mean $\mu ( t )$ proves
    (\ref{kkl_decomposition}).
\end{proof}

We now ask whether the infinite neural network representation (\ref{kkl_decomposition}) can be replaced exactly by a finite-$N$ single layer network. By this we mean an architecture
\begin{align}\label{finite_kkl_like}
    x_t = \phi_\vartheta ( t ) = x_0 ( t ) + \sum_{k=1}^{N} \vartheta_k f_k ( t ) \, , \qquad N < \infty \, ,
\end{align}
where $x_0, f_k \in L^2 ( T )$ are fixed deterministic functions and the $\vartheta_k$ are random variables with an arbitrary joint distribution. The functions $f_k$ are completely general square-integrable features, and we impose no differentiability assumption upon them, so the argument below does not rely on using smooth or piecewise differentiable activations. We note that this is a less general architecture than the $1d$ version of (\ref{single_layer}), which also permits random input weights, whereas here we neglect this possibility and restrict to fixed features for simplicity. We will return to the more general single-layer architecture in Section \ref{sec:qft}, and also briefly comment on a quantum mechanical example in Section \ref{sec:counterexample}.

\begin{theorem}[Finite-width NN-QM obstruction]\label{QMdimensionalThm}
    Let $x_t$ be a real stochastic process on an interval $T = [ a, b ]$ whose connected two-point function admits the K\"all\'en-Lehmann representation (\ref{kl_representation}) with non-trivial measure $\rho$. Then no finite-width architecture of the form (\ref{finite_kkl_like}) can reproduce the connected two-point function of $x_t$. In particular, such an architecture cannot provide an exact neural network realization of $x_t$.
\end{theorem}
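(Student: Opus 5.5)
The plan is a rank argument: a finite-width architecture forces a finite-rank covariance operator, while a non-trivial Källén-Lehmann kernel gives an operator of infinite rank. First, for the architecture (\ref{finite_kkl_like}), assume the $\vartheta_k$ have finite second moments (otherwise the two-point function is not even defined). Its mean is $\mu(t) = x_0(t) + \sum_k \langle \vartheta_k\rangle f_k(t)$, and its connected two-point function is
\begin{align}
    C_N(t,s) = \sum_{j,k=1}^N \Sigma_{jk} f_j(t) f_k(s) \, , \qquad \Sigma_{jk} = \langle \hat\vartheta_j \hat\vartheta_k \rangle \, .
\end{align}
The associated integral operator $\mathcal{O}_N$ on $L^2(T)$ therefore has range inside $\mathrm{span}\{f_1,\dots,f_N\}$, so its rank is at most $N$. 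If $C_N = C$ almost everywhere on $T\times T$, then $\mathcal{O} = \mathcal{O}_N$ as operators on $L^2(T)$, and $\mathcal{O}$ would have at most $N$ non-zero eigenvalues. So it suffices to show that $\mathcal{O}$ has infinite rank whenever $\rho((0,\infty))>0$.

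To show this, suppose instead that $\mathcal{O}$ has finite rank. By Mercer's theorem, as used in the proof of Lemma \ref{kl_implies_kkl}, we would get $C(t,s) = \sum_{k=1}^M \lambda_k e_k(t) e_k(s)$ pointwise. The eigenfunctions satisfy $e_k = \lambda_k^{-1}\mathcal{O} e_k$, so it would follow that $g(\tau) := C(t,s)|_{|t-s|=\tau}$ is smooth in $\tau$ on $[0, b-a]$, and one-sided smooth up to $\tau = 0$. I would rule this out by a kink argument. Write $C(t,s) = g(|t-s|)$ with $g(\tau) = \int \rho(dm) e^{-m\tau}$. Fix an interior $s$ and compare the one-sided derivatives of $t\mapsto C(t,s)$ at $t=s$. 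If $C$ were a finite sum of products of $C^1$ functions, the two one-sided derivatives would agree. But $\partial_t C|_{t=s^\pm} = \pm g'(0^+)$, and $g'(0^+) = -\int m\,\rho(dm) < 0$, possibly $-\infty$, whenever $\rho((0,\infty))>0$. So there is a genuine jump, or a divergence, which is a contradiction.

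The main obstacle is making the smoothness step rigorous. The $e_k$ are only known to be continuous, and it is not obvious a priori that they are differentiable. A cleaner route I would try first avoids derivatives altogether. Using $g$ directly, for $\tau$ in $(0,b-a)$ the function $g$ is real-analytic, being a Laplace transform of a positive measure. Restricted to the upper triangle $t>s$, the kernel $g(t-s)$ has finite rank, so $g$ satisfies a finite-dimensional translation-invariant structure. Concretely, the Hankel-type matrices $[g(\tau_i+\sigma_j)]$ have rank at most $M$ for all small $\tau_i,\sigma_j>0$. This forces $g$ to solve a linear constant-coefficient ODE on $(0, b-a)$, so $g$ is an exponential polynomial there, and by analyticity $g(\tau)=\sum_i p_i(\tau)e^{-m_i\tau}$ on $(0,\infty)$. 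Since $g$ is a Laplace transform of a positive measure, the measure must then be a finite sum of point masses, $\rho=\sum_i c_i\delta_{m_i}$ with $c_i>0$. The final step is to check that even a single term with $m_1>0$ gives infinite rank on $T\times T$. The kernel $e^{-m_1|t-s|}$ is the Green's function of $-\partial_t^2+m_1^2$ up to boundary terms, so its eigenfunctions are infinitely many trigonometric modes. Equivalently, $\mathcal{O}$ is injective, because applying $(-\partial_t^2+m_1^2)$ to $\mathcal{O}f$ recovers $2m_1 f$; hence it has infinite rank on the infinite-dimensional space $L^2(T)$. For a finite sum of such terms, a positive combination of injective positive operators is still injective, which completes the contradiction.
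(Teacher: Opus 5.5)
Your proof is correct. Its outer shell is the same comparison the paper makes: the network covariance has rank at most $N$, so it suffices that $\mathcal{O}$ have infinite rank. The paper does this explicitly in Corollary~\ref{no_truncation_lemma}, and in its first proof via a single test vector $h$ orthogonal to the features.

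You differ in how you show $\mathcal{O}$ has infinite rank. The paper does it in one line for arbitrary $\rho$: the Fourier identity for $e^{-m|t-s|}$ gives $\langle h,\mathcal{O}h\rangle=\frac{1}{2\pi}\int\rho(dm)\int d\omega\,\frac{2m}{\omega^2+m^2}|\tilde h(\omega)|^2>0$ for all $h\neq0$, so $\ker\mathcal{O}=\{0\}$.

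You instead argue by contradiction through a chain of steps:
\begin{enumerate}[label = (\roman*)]
\item finite rank makes the Hankel kernel $g(\tau+\sigma)$ finite rank;
\item since $g$ is smooth on $(0,\infty)$ and finite-dimensional function spaces are closed under pointwise limits, $g$ satisfies a constant-coefficient ODE and so is an exponential polynomial;
\item the Stieltjes transform $\int\rho(dm)/(z+m)$ is then rational, which forces $\rho$ to be finitely atomic;
\item only then do you use the Green's-function identity $(-\partial_t^2+m^2)e^{-m|t-s|}=2m\,\delta(t-s)$ to get injectivity.
\end{enumerate}
These steps are sound, with two caveats. The Kronecker-type step and the Laplace-uniqueness step are only sketched. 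You also tacitly need each $\mathcal{O}_m$ to be a positive operator; this holds, for instance because $e^{-m|t-s|}$ is the Ornstein--Uhlenbeck covariance.

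Steps (i)--(iii) are superfluous, because your own last step already works for general $\rho$. Write $\langle f,\mathcal{O}f\rangle=\int\rho(dm)\,\langle f,\mathcal{O}_m f\rangle$. The integrand is strictly positive for every $m>0$ and $f\neq0$, since $\mathcal{O}_m$ is positive and injective. So $\rho((0,\infty))>0$ gives injectivity of $\mathcal{O}$, and hence infinite rank, directly. This is exactly the paper's Lemma~\ref{positiveQMCovariance}, with the Green's function in place of the Fourier transform.

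The Hankel detour does yield some extra structure, namely that a finite-rank stationary kernel on an interval must be an exponential polynomial. The paper's explicit Fourier formula is shorter, handles every $\rho$ at once, and feeds directly into the KKL optimality bound.
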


Because the Osterwalder-Schrader assumptions furnish (\ref{kl_representation}), the theorem applies in particular to any OS quantum system. The conclusion is already a statement about the two-point function and therefore requires neither Gaussianity nor any assumptions about higher moments. We will prove it in two complementary ways. In Subsection \ref{nnqmDimCount}, we show that the K\"all\'en-Lehmann covariance has strictly positive variance in every non-zero direction of $L^2 ( T )$, whereas a finite feature space necessarily has an orthogonal direction with zero variance. We then give a second proof using the optimality of the KKL basis: its covariance operator has infinitely many strictly positive eigenvalues, so every finite KKL truncation has non-zero mean-square error, while an exact finite representation would have zero error. Let us begin with the dimension-counting argument.

\subsection{Dimension-Counting}\label{nnqmDimCount}
At finite $N$, we find the NN representation of QM breaks down. Here we show the breakdown as caused by a mismatch in dimensions. Take a linear finite NN,

\begin{equation}\label{finiteNNQM}
    x_{ t } = x_{ 0 } ( t )\, +\, \sum_{ k = 1 }^{ N } \vartheta_{ k } f_{ k } ( t ),\quad N < \infty,
\end{equation}

\noindent where $x_0, f_k : T \to \mathbb{R}$ are deterministic measurable functions with $x_0, f_k \in L^2(T)$, and $\vartheta_k$ are random. The functions $f_k$ are also known as \emph{features} of the network. Every possible path, $x_t$, lives in a finite dimensional vector space, $x_t \in V = x_0 + \text{span}\left \{f_1, \ldots, f_N \right \}$. Formally, it is a random variable on $L^2(T)$ and for every parameter draw there is a natural pairing,

\begin{equation}\label{qmSMEAR}
    \langle x_t, h \rangle_{ L^2 ( T ) } = \int_T x_t h( t )\, dt\,,\ \text{ for all } h \in L^{2} ( T )\, .    
\end{equation}

\noindent Equation (\ref{qmSMEAR}) describes how much the process $x_t$ has moved along $h$ and is itself a random variable. The following lemma describes the variance of how well the process aligns with any $h$.

\begin{lemma}\label{positiveQMCovariance}
    Let $x_t$ be a stochastic process on $T=[a,b]$ whose connected covariance has the K\"all\'en-Lehmann representation (\ref{kl_representation}), where $\rho$ is a non-trivial measure. Then for every non-zero $ h \in L^2 ( T, \mathbb{R} ) $, the random variable
    \begin{equation}
        \hat{ x } ( h ) = \int_{T} \hat{ x }_t \, h ( t ) \, dt\, ,
    \end{equation}
    \noindent has strictly positive variance, $ \langle \hat{x} ( h )^2 \rangle  > 0 $.
\end{lemma}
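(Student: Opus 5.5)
We will write the variance as a quadratic form in $h$ and then show that this form is strictly positive by passing to Fourier space. By Fubini, which is justified since $C$ is bounded and continuous on $T\times T$ and $h\in L^2(T)\subset L^1(T)$, we have
\begin{align}
    \langle \hat{x}(h)^2\rangle = \int_T\int_T h(t)\,C(t,s)\,h(s)\,dt\,ds = \int_0^\infty \rho(dm)\, Q_m(h) \, , \qquad Q_m(h) = \int_T\int_T h(t)\,e^{-m|t-s|}\,h(s)\,dt\,ds \, .
\end{align}
The second equality is a further exchange of the $\rho$-integral with the $dt\,ds$ integrals. This is allowed because $\rho$ is finite and $|e^{-m|t-s|}|\le 1$.

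The key step is to show that $Q_m(h)>0$ for every $m>0$ and every non-zero $h$. Extend $h$ by zero to a function on $\mathbb{R}$. It then lies in $L^1\cap L^2(\mathbb{R})$, and its Fourier transform $\tilde h$ is continuous and not identically zero. We use the standard transform
\begin{align}
    \int_{\mathbb{R}} e^{-m|u|}e^{-iku}\,du = \frac{2m}{m^2+k^2} \, .
\end{align}
Applying Plancherel/convolution, we get
\begin{align}
    Q_m(h)=\frac{1}{2\pi}\int_{\mathbb{R}} dk\,\frac{2m}{m^2+k^2}\,|\tilde h(k)|^2 .
\end{align}
The weight $2m/(m^2+k^2)$ is strictly positive for all $k$ when $m>0$. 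Also $|\tilde h|^2$ is continuous, non-negative, and non-zero somewhere, so it is positive on an open set. Hence $Q_m(h)>0$ for every $m>0$. For $m=0$ we only have $Q_0(h)=\left(\int_T h\right)^2\ge 0$. This case is harmless because it enters with a non-negative weight.

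To finish, note that $m\mapsto Q_m(h)$ is continuous (by dominated convergence) and strictly positive on $(0,\infty)$. By assumption $\rho\big((0,\infty)\big)>0$. Integrating a strictly positive measurable function against a measure that gives positive mass to $(0,\infty)$ yields a strictly positive result, because $(0,\infty)=\bigcup_n\{Q_m(h)>1/n\}$. Adding the non-negative $m=0$ contribution gives $\langle\hat x(h)^2\rangle>0$.

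The main obstacle is the middle step, the strict positive-definiteness of the exponential kernel on $L^2(T)$, together with justifying the Fourier representation for $h$ that is only in $L^2$. Because $T$ is compact, $h\in L^1$, so $\tilde h$ is a genuine continuous function and the convolution identity holds directly. As an alternative we could use the semigroup (Markov) factorization $e^{-m|t-s|}$ for the Ornstein--Uhlenbeck kernel, but the Fourier argument is cleaner.
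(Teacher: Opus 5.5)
Your proof is correct and takes the same route as the paper: the Fourier representation $e^{-m|t-s|}=\frac{1}{2\pi}\int \frac{2m}{\omega^2+m^2}e^{i\omega(t-s)}\,d\omega$, Plancherel after extending $h$ by zero, and positivity of the resulting integrand. Your version is in fact more careful than the paper's: you justify the Fubini exchanges, you handle the $m=0$ atom separately via $Q_0(h)=(\int_T h)^2\geq 0$ (where the paper's weight $\frac{2m}{\omega^2+m^2}$ degenerates), and you use $\rho((0,\infty))>0$ explicitly to obtain strict positivity, whereas the paper only appeals to the integrand being positive.
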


\begin{proof}
    We show this by directly looking at the variance of $\hat { x } ( h ) $,
    \begin{equation*}
        \langle \hat{ x } ( h )^2 \rangle = \left \langle \int \hat{x}_{t} \hat{x}_s\, h(t) h(s)\, dt\, ds \right \rangle = \int \left \langle \hat{x}_t \hat{x}_s\right \rangle\,  h(t) h(s)\, dt\, ds \, .
    \end{equation*}

    \noindent The function $h$ is deterministic, while the path $\hat{x}_t$ is random. Using a known Fourier identity,
    \begin{equation*}
        e^{ -m \vert t - s \vert } = \frac{1}{ 2 \pi } \int \frac{ 2 m }{ \omega^2 + m^2 }\, e^{ i \omega ( t - s )} \, d\omega\, ,
    \end{equation*}
    we can expand the KL spectral representation as
    \begin{align*}
        \int \left \langle \hat{x}_t \hat{x}_s\right \rangle\,  h(t) h(s)\, dt\, ds &= \int e^{-m \vert t - s \vert }  h ( t ) h ( s )\, \rho(\, dm\, )\, dt \, ds\\
        &= \frac{ 1 }{ 2 \pi } \int \frac{ 2 m }{ \omega^2 + m^2 }\, e^{ i \omega ( t - s )}\, h ( t ) h ( s )\, \rho(\, dm\, )\,  d\omega\, dt\, ds\\
        &= \frac{ 1 }{  2 \pi  } \int \frac{ 2 m }{ \omega^2 + m^2 }\, \tilde{ h } ( \omega )  \tilde{ h } ( - \omega ) \, \rho(\, dm\, )\,  d\omega\\
        &= \frac{ 1 }{ 2 \pi } \int \frac{ 2 m }{ \omega^2 + m^2 }\, \vert \tilde{ h } ( \omega ) \vert^2 \rho(\, dm\, )\,  d\omega\, > 0\,.
    \end{align*}
    In going to the third equality we recognize the Fourier transform of $h$. Since $ h \neq 0 $, Plancherel's theorem tells us $ \tilde{ h } \neq 0 $. One need not worry about $h$ only defined on $T$, as we can extend it by 0 on $ T^{ c } $. Now, since everything in the integrand is positive, and by assumption the measure is positive, we have the desired claim, $ \langle \hat{ x } ( h )^2 \rangle > 0 $.
\end{proof}

A true NN representation of QM is supported on the infinite dimensional function space of paths that are continuous everywhere, differentiable nowhere. The paths in $V$ of a finite NN will never be able to fluctuate throughout the whole of $L^2(T)$, which is needed in order to capture the quantum properties of the representation. Herein lies the dimensional mismatch at finite $N$.

\begin{proof}[Proof of Theorem \ref{QMdimensionalThm}]
    Assume $x_t \in L^2(T)$ and is of the form equation (\ref{finiteNNQM}). Since it satisfies the OS axioms it must have a KL spectral presentation and therefore Lemma \ref{positiveQMCovariance} holds.\\
    \indent However, since $x_t \in V$ and $V$ is finite dimensional, there is a non-trivial $h \in L^2(T)$ that is orthogonal to all $f_k$. The functional $x(h)$ must have zero variance since
    \begin{equation*}
        x(h) = \int_T x_0(t)h(t)\, dt\,, 
    \end{equation*}
    is deterministic for \emph{every} parameter draw. Therefore $\operatorname{Var}\!\left(\phi_{\vartheta}(h)\right)=0$, and we have contradicted Lemma \ref{positiveQMCovariance}. Hence, no linear finite network can realize QM.
\end{proof}

We have thus shown that a single layer, finite-width network (\ref{finite_kkl_like}) with fixed features cannot reproduce a quantum mechanical stochastic process. The OS axioms require a K\"all\'en-Lehmann representation for the two-point function, which can only be achieved by $f_k$ that span the whole of $L^2(T)$. Physically, one might interpret this as the statement that the path integral instructs us to sum over \emph{all} possible paths, which cannot be achieved by retaining only a finite set of paths $f_k ( t )$.

This argument relied on the ability to prescribe a KL spectral representation to the two point function. By Lemma \ref{kl_implies_kkl}, the process also has a KKL decomposition. However, the KKL decomposition is a sum over infinitely many $e_k$. In the next section we explore the consequences of a finite truncation of the KKL expansion.

\subsection{KKL Optimality}

We will now present a different argument for Theorem \ref{QMdimensionalThm} which relies upon the fact that, among all orthogonal expansions which represent a stochastic process, the KKL expansion is the optimal one in the sense of minimizing truncation error. To review this optimality statement, let $u_1,\ldots,u_N$ be any orthonormal family in $L^2 ( T,\mathbb{R} )$, and let
\begin{align}
    V_N = \operatorname{span}
    \left\{ u_1,\ldots,u_N \right\} \, .
\end{align}
Among all approximations taking values in $V_N$, the one with the smallest integrated mean-square error is the orthogonal projection of $\hat{x}$ onto $V_N$. Its coefficients are
\begin{align}
    \eta_k = \hat{x} ( u_k ) = \int_a^b d t \, \hat{x}_t u_k ( t ) \, .
\end{align}
It is convenient to introduce the symbol $\mathcal{E}_N ( V_N )$ for this minimized mean-square truncation error associated with approximating a centered process $\hat{x}_t$ using the vector space $V_N$, i.e.
\begin{align}\label{error_defn}
    \mathcal{E}_N ( V_N ) =  \inf_{\eta_1,\ldots,\eta_N} \left\langle \left\| \hat{x} - \sum_{k=1}^{N} \eta_k u_k \right\|_{L^2 ( T )}^2 \right\rangle \, .
\end{align}
The result of KKL optimality is that this truncation error is bounded by a sum of eigenvalues associated with the operator $\mathcal{O}$ of (\ref{O_operator_defn}) used in constructing the KKL expansion:
\begin{align}\label{kkl_optimality}
    \mathcal{E}_N ( V_N ) \geq \sum_{k>N} \lambda_k \, .
\end{align}
The inequality (\ref{kkl_optimality}) is saturated by choosing
\begin{align}
    u_k=e_k,\qquad \eta_k=\theta_k
\end{align}
for which
\begin{align}
    V_N=\operatorname{span}\left\{e_1,\ldots,e_N\right\} \, ,
\end{align}
which means that one has used the truncated KKL expansion itself.\footnote{We neglect possible degeneracies that can occur when multiple KKL eigenvalues coincide.} Thus the first $N$ KKL modes capture the greatest possible variance among all $N$-dimensional deterministic feature spaces and minimize the integrated mean-square truncation error.

The positivity result of Lemma \ref{positiveQMCovariance} now has the following immediate consequence.

\begin{corollary}\label{no_truncation_lemma}
    Let $x_t$ obey the assumptions of Lemma \ref{kl_implies_kkl} with a non-trivial K\"all\'en-Lehmann measure $\rho$. Then $\mathcal{O}$ has infinite rank, and the KKL decomposition (\ref{kkl_decomposition}) contains infinitely many random coefficients, each with strictly positive variance:
    \begin{align}\label{kkl_coefficients_positive}
        \left\langle \theta_k^2 \right\rangle
        = \lambda_k
        > 0 \, .
    \end{align}
\end{corollary}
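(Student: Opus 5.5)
The plan is to argue by contradiction: assume $\mathcal{O}$ has finite rank and show this violates Lemma \ref{positiveQMCovariance}. The positivity of the coefficient variances will then follow directly from Lemma \ref{kl_implies_kkl}.

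Suppose $\mathcal{O}$ had finite rank $r < \infty$. By the Mercer expansion obtained in the proof of Lemma \ref{kl_implies_kkl}, the covariance would be
\begin{align*}
    C ( t, s ) = \sum_{k=1}^{r} \lambda_k e_k ( t ) e_k ( s ) \, .
\end{align*}
Since $L^2 ( T )$ is infinite dimensional, we can pick a non-zero $h \in L^2 ( T )$ orthogonal to $e_1, \ldots, e_r$. For this $h$ the smeared variance vanishes:
\begin{align*}
    \langle \hat{x} ( h )^2 \rangle = \int_T \int_T C ( t, s ) h ( t ) h ( s ) \, dt \, ds = \sum_{k=1}^{r} \lambda_k \langle e_k , h \rangle^2 = 0 \, .
\end{align*}
Interchanging expectation and integration here is justified, as in the proof of Lemma \ref{positiveQMCovariance}, because $C$ is bounded and continuous on the compact square $T \times T$. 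This contradicts Lemma \ref{positiveQMCovariance}, which applies because $\rho$ is non-trivial. Hence $\mathcal{O}$ has infinite rank, and its non-zero eigenvalues form a countably infinite sequence.

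Equivalently, one can phrase this as strict positivity of the quadratic form: $\langle h, \mathcal{O} h \rangle = \langle \hat{x} ( h )^2 \rangle > 0$ for all $h \neq 0$. Thus $\ker \mathcal{O} = \{ 0 \}$, and since $\mathcal{O}$ is compact and self-adjoint, its eigenfunctions with non-zero eigenvalues span a dense subspace of the infinite-dimensional space $L^2 ( T )$. There must therefore be infinitely many of them.

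For the variances, Lemma \ref{kl_implies_kkl} gives $\langle \theta_k^2 \rangle = \lambda_k$, with $\lambda_k > 0$ by construction, since the sum runs only over non-zero eigenvalues and $\mathcal{O}$ is positive. As the sum (\ref{kkl_decomposition}) is indexed by infinitely many $k$, this yields (\ref{kkl_coefficients_positive}).

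The only step needing care is the Fubini interchange identifying $\langle h, \mathcal{O} h \rangle$ with $\langle \hat{x} ( h )^2 \rangle$. This holds because $\hat{x}$ is mean-square continuous with bounded variance $C ( t, t ) = \rho \big( [ 0, \infty ) \big) < \infty$ and $h \in L^2 \subset L^1$ on a compact interval. The argument is otherwise routine.
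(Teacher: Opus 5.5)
Your proposal is correct and follows essentially the same route as the paper. Lemma \ref{positiveQMCovariance} gives strict positivity of $\langle h, \mathcal{O} h \rangle = \langle \hat{x}(h)^2 \rangle$, which forces $\ker \mathcal{O} = \{0\}$ and hence infinite rank; your finite-rank contradiction via a non-zero $h$ orthogonal to $e_1, \ldots, e_r$ is the same argument. The only cosmetic difference is that you read off $\lambda_k > 0$ directly from the construction in Lemma \ref{kl_implies_kkl}, whereas the paper re-derives it by applying Lemma \ref{positiveQMCovariance} with $h = e_k$; both are valid.
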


\begin{proof}
    We begin by applying Lemma \ref{positiveQMCovariance} where we choose the smearing function $h$ to be one of the KKL eigenfunctions $e_k$. Using orthonormality, the resulting smeared random variable is simply
    \begin{align}
        \theta_k = \hat{x} ( e_k ) = \int_a^b d t \, \hat{x}_t e_k ( t ) \, ,
    \end{align}
    and thus (\ref{thetas_to_lambdas}) and Lemma \ref{positiveQMCovariance} give
    \begin{align}
        \left\langle \theta_k^2 \right\rangle = \lambda_k > 0 \, .
    \end{align}
    As usual, we neglect a possible contribution to the measure $\rho$ at $m = 0$, which does not affect this conclusion. More generally, the same lemma implies
    \begin{align}
        \left\langle
            h,\mathcal{O}h
        \right\rangle_{L^2 ( T )}
        =
        \left\langle \hat{x} ( h )^2 \right\rangle
        > 0
    \end{align}
    for every non-zero $h \in L^2 ( T,\mathbb{R} )$. Hence $\ker\mathcal{O}=\{0\}$. Since $L^2(T)$ is infinite-dimensional, $\mathcal{O}$ cannot have finite rank. Its compactness, self-adjointness, and positivity therefore imply that it has infinitely many non-zero eigenvalues $\lambda_k>0$; the corresponding KKL coefficients satisfy $\langle\theta_k^2\rangle=\lambda_k>0$.
\end{proof}

An exact rank-$N$ representation would have zero integrated mean-square error after projection onto its feature space. Corollary \ref{no_truncation_lemma} and the optimality relation (\ref{kkl_optimality}) show that this is impossible. The following proof makes the contradiction explicit.

\begin{proof}[Proof of Theorem \ref{QMdimensionalThm}]
    Suppose, by way of contradiction, that an architecture of the form (\ref{finite_kkl_like}) reproduces the connected two-point function of $x_t$. For simplicity, assume that we have subtracted off the mean value and work instead with the centered version of the process. Without loss of generality, we may take the functions $f_k$ to be linearly independent; otherwise, replace the feature list with a basis for its span. We may then orthonormalize them using the Gram-Schmidt procedure. Explicitly, set
    \begin{align}
        g_1 = f_1 \, , \qquad \hat{f}_1 = \frac{g_1}{ \| g_1 \| } \, ,
    \end{align}
    where $\| f \|^2 = \langle f, f \rangle = \int_a^b dt \, f^2$ for any function $f: [ a, b ] \to \mathbb{R}$, and for $2 \leq k \leq N$ let
    \begin{align}
        g_k = f_k - \sum_{j=1}^{k-1} \langle f_k , \hat{f}_j \rangle \hat{f}_j \, , \qquad \hat{f}_k = \frac{g_k}{\| g_k \|} \, .
    \end{align}
    The stochastic process $x_t$ may therefore be expressed as
    \begin{align}
        x_t = \phi_\vartheta ( t ) = \sum_{k=1}^{N} \hat{\vartheta}_k \hat{f}_k ( t ) \, ,
    \end{align}
    where the $\hat{f}_k$ have been defined above, and the $\hat{\vartheta}_k$ are corresponding linear combinations of the original parameters $\vartheta_k$ which are defined by the condition that $\sum_{k=1}^{N} \vartheta_k f_k ( t ) = \sum_{k=1}^{N} \hat{\vartheta}_k \hat{f}_k ( t )$. The functions $\hat{f}_k$ are now orthonormal, but do not form a complete basis for the Hilbert space $H = L^2 ( [ a, b ] )$ because there are only finitely many of them. Extend this to a complete orthonormal basis as follows: let
    \begin{align}
        V = \mathrm{span} \left( \hat{f}_1 , \ldots, \hat{f}_N \right) \, , \qquad V^\perp = \{ h \in H \mid \langle h , v \rangle = 0 \text{ for all } v \in V \} \, ,
    \end{align}
    so that $H = V \oplus V^\perp$. Choose a basis $f_{N+1}, f_{N+2}, \ldots$ for $V^\perp$, and then again orthonormalize to find corresponding functions $\hat{f}_{N+1}$, $\hat{f}_{N+2}$, $\ldots$. We now have a complete orthonormal basis $\{ \hat{f}_k \mid k = 1, 2, \ldots \}$ for $H$, where the first $N$ of the $\hat{f}_k$ coincide with those defined above.

    By the assumption that $x_t$ admits a finite linear NN representation, we then have
    \begin{align}\label{truncated_orthonormal_contradiction}
        x_t = \phi_\vartheta ( t ) = \sum_{k=1}^{N} \hat{\vartheta}_k \hat{f}_k ( t ) = \sum_{k=1}^{\infty} \hat{\vartheta}_k \hat{f}_k ( t ) \, , 
    \end{align}
    where the random variables $\hat{\vartheta}_k$ vanish identically for $k > N$. In other words, using the orthonormal basis $\hat{f}_k$ and the random variables $\hat{\vartheta}_k$, there is no truncation error from terminating the infinite sum in the final expression of (\ref{truncated_orthonormal_contradiction}) at $k = N$. Compare this to the expansion of $x_t$ using the KKL decomposition, which is guaranteed to exist by Lemma \ref{kl_implies_kkl}:
    \begin{align}\label{xt_kkl_comparison}
        x_t = \sum_{k=1}^{\infty} \theta_k e_k ( t ) \, .
    \end{align}
    Truncating the sum (\ref{xt_kkl_comparison}) at the same upper value $N$ of the index $k$ incurs a non-zero truncation error, since by Corollary \ref{no_truncation_lemma}, all of the random variables $\theta_k$ have strictly positive variance. But this contradicts the fact that, among all orthonormal expansions of a stochastic process, the KKL decomposition minimizes the truncation error.
\end{proof}

The two proofs of Theorem \ref{QMdimensionalThm} reveal the same obstruction, but from complementary perspectives. The dimension-counting argument is qualitative: it identifies a non-zero direction orthogonal to the finite feature space in which the network has zero variance, contradicting the strict positivity of the K\"all\'en-Lehmann covariance. The KKL argument is instead variational and quantitative. For any rank-$N$ approximation within the finite-feature class considered here, its integrated mean-square error satisfies $\mathcal{E}_N \geq \sum_{k>N}\lambda_k \geq \lambda_{N+1}$. Thus the first omitted KKL eigenvalue supplies a basis-independent lower bound on the error of every finite-$N$ approximation, while the full collection of truncated eigenvalues gives the sharp optimal bound attained by the KKL truncation itself.

\section{Neural Network Quantum Field Theory at Finite \texorpdfstring{$N$}{N}}\label{sec:qft}

We now consider quantum field theories in $d \geq 2$ dimensions. As is standard in constructive QFT (see e.g. \cite{glimm2012quantum}), we phrase this discussion in terms of generalized functions or distributions (like the Dirac delta distribution), not to be confused with probability distributions.\footnote{See \cite{https://doi.org/10.1002/prop.2190440204} for a historical account of the discovery of the fact that quantum fields are distributional.} 

Let us make a brief mathematical remark in passing. In probabilistic language, the \emph{law} of a random field is the probability measure that it induces on the space of field configurations. Thus one could adopt a stronger notion of an NN-FT realization in which the joint laws of all finite collections of smeared field variables agree between the QFT and the neural network ensemble. Equality of these laws implies agreement of all smeared correlation functions, whereas the converse need not hold without additional moment-determinacy assumptions. For simplicity, however, we will generally avoid speaking about the law of random fields and rather speak about a quantum field theory as a probability distribution over Schwartz distributions. We trust that the difference between a probability distribution and a Schwartz distribution will be clear from context.

If $C^{\infty} ( \mathbb{R}^d )$ is the space of all infinitely differentiable functions on $\mathbb{R}^d$, let $\mathcal{S} ( \mathbb{R}^d ) \subset C^{\infty} ( \mathbb{R}^d )$ be the \emph{Schwartz space} of rapidly decaying test functions. More precisely, $\mathcal{S} ( \mathbb{R}^d )$ is the space of all smooth functions $f$ with the property that the product of any derivative of $f$, multiplied by any power of $|x|$, converges to $0$ as $x \to \infty$. By ``any derivative'' we include derivatives of arbitrarily high order and with respect to any of the $d$ arguments of $f$. Let $\mathcal{S}' ( \mathbb{R}^d )$ be the dual of the space of Schwartz functions, i.e. the space of continuous linear functionals on $\mathcal{S} ( \mathbb{R}^d )$. We refer to $\mathcal{S}' ( \mathbb{R}^d )$ as the space of (tempered) Schwartz distributions.\footnote{In some references, the term ``Schwartz distribution'' is reserved for linear functionals on the space of smooth functions with \emph{compact} support, whereas the phrase ``tempered distribution'' is used for the space which we call $\mathcal{S}' ( \mathbb{R}^d )$. For simplicity, we will neglect this distinction and simply use the single phrase ``Schwartz distribution'' for an element of $\mathcal{S}' ( \mathbb{R}^d )$ in what follows.} An element $\Phi \in \mathcal{S} ' ( \mathbb{R}^d )$ is a function which maps each $f \in \mathcal{S} ( \mathbb{R}^d )$ to a number $\Phi ( f )$.

Ordinary Schwartz functions may also be regarded as Schwartz distributions, where the linear pairing is provided by integration. For instance, if $\phi \in \mathcal{S} ( \mathbb{R}^d )$, one may apply the inclusion map $\iota : \mathcal{S} ( \mathbb{R}^d ) \to \mathcal{S}' ( \mathbb{R}^d )$ to obtain a distribution $\iota \phi = \Phi \in \mathcal{S} ' ( \mathbb{R}^d )$. The values of $\Phi$ on test functions are given by
\begin{align}\label{function_to_distribution}
    \Phi ( f ) = \int d^d x \, \phi ( x ) f ( x ) \, .
\end{align}
As we mentioned in the introduction, it will be convenient to speak of the space of proper Schwartz distributions, which are those elements $\Phi \in \mathcal{S} ' ( \mathbb{R}^d )$ such that there exists no function $h : \mathbb{R}^d \to \mathbb{R}$ with the property that $\Phi ( f ) = \int d^d x \, h ( x ) f ( x )$ for all $f \in \mathcal{S} ( \mathbb{R}^d )$. That is, proper Schwartz distributions are the distributions that \emph{cannot} be represented by ordinary functions. Note that the space of proper Schwartz distributions is not the same as $\mathcal{S}' ( \mathbb{R}^d ) \setminus \iota \left( \mathcal{S} ( \mathbb{R}^d ) \right)$ since there exist Schwartz distributions associated with functions that are not Schwartz functions. A simple example is any constant function on $\mathbb{R}^d$, which can be integrated against any test function but which does not decay at infinity.

Although Schwartz functions and their associated distributions are fundamentally different mathematical objects -- whereas $\phi : \mathbb{R}^d \to \mathbb{R}$ takes spacetime points as inputs, its associated distribution $\Phi : \mathcal{S} ( \mathbb{R}^d ) \to \mathbb{R}$ takes Schwartz functions as inputs -- we will be cavalier about the distinction between $\phi$ and $\Phi = \iota \phi$ in what follows, writing both as $\phi$.

Because typical field configurations in QFT take values in $\mathcal{S}' ( \mathbb{R}^d )$ and need not admit values at points, for clarity we will usually speak of correlation functions
\begin{align}
    G^{(n)} ( f_1, \ldots, f_n ) = \langle \phi ( f_1 ) \ldots \phi ( f_n ) \rangle
\end{align}
where $f_i \in \mathcal{S} ( \mathbb{R}^d )$, rather than referring to correlators such as $\langle \phi ( x_1 ) \phi ( x_2 ) \rangle$.

The relationship between these two perspectives can be established by considering so-called \emph{approximations to the identity} which converge to Dirac delta distributions in a suitable sense. Choose any non-negative smooth function $\eta ( x )$ with compact support and $\int d^d x \, \eta ( x ) = 1$. Then one can define the \emph{mollifier}
\begin{align}\label{mollifier_defn}
    \eta_{\epsilon, x} ( y ) = \epsilon^{-d} \eta \left( \frac{y-x}{\epsilon} \right) \, .
\end{align}
For instance, $\eta ( x )$ can be chosen to be a suitably normalized bump function. Then $\eta_{\epsilon, x} ( y ) \in \mathcal{S} ( \mathbb{R}^d )$ for any finite $\epsilon$, and the mollified field value $\phi ( \eta_{\epsilon, x} )$ takes a finite value for any $\phi \in \mathcal{S} ' ( \mathbb{R}^d )$. As $\epsilon \to 0$, the mollifier $\eta_{\epsilon, x} ( y )$ behaves as a delta function supported at the point $x = y$. Thus one can recover the familiar Schwinger functions $S^{(n)} ( x_1, \ldots, x_n )$, which take $n$ spacetime points as inputs, via the limiting procedure
\begin{align}
    S^{(n)} ( x_1 , \ldots, x_n ) &= \lim_{\epsilon \to 0} G^{(n)} \left( \eta_{\epsilon, x_1} , \ldots, \eta_{\epsilon, x_n} \right) = \lim_{\epsilon \to 0} \langle \phi ( \eta_{\epsilon, x_1} )  \ldots \phi ( \eta_{\epsilon, x_n} ) \rangle \, .
\end{align}
A quite general, non-perturbative fact about quantum field theories -- which requires only the OS axioms of temperedness, Euclidean invariance, and reflection positivity -- is the existence of a K\"all\'en-Lehmann representation for the two-point function. Although this is often written in the momentum-space form
\begin{align}\label{momentum_space_KL}
    G ( p ) = \int_0^\infty \frac{\rho ( d \mu^2 ) }{p^2 + \mu^2}
\end{align}
for a non-negative measure $\rho$, we will also work with the distributional form,
\begin{align}\label{main_kallen_lehmann}
    C ( f, g ) &= \mathbb{E} \left[ \overline{\phi ( f )} \phi ( g ) \right] \nonumber \\
    &= \int_0^\infty \rho ( d \mu^2 ) \int_{\mathbb{R}^d} \frac{d^d p}{ ( 2 \pi )^d } \frac{\overline{\widehat{f} ( p )} \widehat{g} ( p )}{p^2 + \mu^2} \, , \qquad \rho \geq 0 \, ,
\end{align}
where $\widehat{f}$ and $\widehat{g}$ denote the Fourier transforms of the test functions $f$ and $g$, respectively. If the measure $\rho$ vanishes identically, we refer to (\ref{main_kallen_lehmann}) as a \emph{trivial} K\"all\'en-Lehmann representation; the resulting theory likewise has a trivial two-point function which does not diverge at coincident points. In what follows, we will restrict to \emph{non-trivial} K\"all\'en-Lehmann representations, although we will sometimes omit this qualification for brevity. Likewise, when considering higher-point functions, we will also implicitly ignore the trivial quantum field theory whose $n$-point functions are all independent of position.

Here and below, the K\"all\'en-Lehmann representation is understood to apply to the connected two-point function, or equivalently to the centered field \(\phi-\langle\phi\rangle\). In the absence of cluster decomposition, the connected two-point function may also contain a position-independent zero-momentum contribution,
\begin{align}
    C_c(f,g)
    =
    c_0\,
    \overline{\widehat f(0)}\,\widehat g(0)
    +
    \int_0^\infty \rho(d\mu^2)
    \int_{\mathbb R^d}
    \frac{d^dp}{(2\pi)^d}
    \frac{\overline{\widehat f(p)}\,\widehat g(p)}
         {p^2+\mu^2},
    \qquad c_0\geq0 .
\end{align}
A random constant field \(\phi(x)=A\) has \(c_0=\operatorname{Var}(A)\) and \(\rho=0\), and is therefore excluded from this class. Regardless, since \(\widehat{\eta}_{\epsilon,x}(0)=1\), the term proportional to \(c_0\) contributes only a finite constant to the mollified coincident variance and does not affect the divergences considered below. Cluster decomposition would further require \(c_0=0\). For these reasons, we ignore the $c_0$ term in what follows, as we did in Section \ref{sec:qm}.

\subsection{Coincident-Point Limits}\label{sec:two_point}

It is well-known that ultraviolet divergences are a quite general feature of quantum field theories, with rare exceptions such as topological QFTs that lack local degrees of freedom. Perhaps the simplest example of such a divergence is that of the two-point function at coincident points; if a quantum field $\phi$ were a true random function, this quantity would measure the variance of the would-be random variable $\phi ( x )$.

First let us review why two-point function divergences are an elementary consequence of a non-trivial K\"all\'en-Lehmann expression (\ref{main_kallen_lehmann}). 

\begin{theorem}
Consider a quantum field theory in $d \geq 2$ spacetime dimensions which admits a representation (\ref{main_kallen_lehmann}) for its two-point function with $\rho$ not identically vanishing. Consider any set of mollifiers (\ref{mollifier_defn}) which approximate the Dirac delta function $\delta^{(d)} ( x - y )$ as $\epsilon \to 0$. Then there exist constants $A, B$ such that, for small $\epsilon$,
\begin{align}\label{coincident_point_limit}
    C \left( \eta_{\epsilon, x} , \eta_{\epsilon, x} \right) \geq \begin{cases} A \log \left( \frac{1}{\epsilon} \right) - B &\text{ if } d = 2 \\ A \epsilon^{2-d} - B &\text{ if } d > 2 \end{cases} \, .
\end{align}
\end{theorem}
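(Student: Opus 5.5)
We plan to compute the mollified coincident variance directly from the distributional Källén-Lehmann form (\ref{main_kallen_lehmann}), setting $f = g = \eta_{\epsilon,x}$. Since $\widehat{\eta}_{\epsilon,x}(p) = e^{-ip\cdot x}\,\widehat{\eta}(\epsilon p)$, the phase cancels in $|\widehat{\eta}_{\epsilon,x}(p)|^2$, and we obtain
\begin{align}
    C(\eta_{\epsilon,x},\eta_{\epsilon,x}) = \int_0^\infty \rho(d\mu^2) \int_{\mathbb{R}^d} \frac{d^dp}{(2\pi)^d} \frac{|\widehat{\eta}(\epsilon p)|^2}{p^2+\mu^2} \, .
\end{align}
Every integrand is non-negative, so Tonelli lets us exchange integrals freely, and we may discard any portion of the domain to obtain a lower bound. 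Because $\rho$ is not identically zero, we will choose a bounded window $[\mu_1^2,\mu_2^2]$ (allowing $\mu_1 = 0$) with $\rho([\mu_1^2,\mu_2^2]) =: r > 0$, and restrict the $\mu^2$-integral to it. On this window we use $p^2 + \mu^2 \le p^2 + \mu_2^2$, which reduces the problem to lower-bounding a single integral $r\int d^dp\, |\widehat{\eta}(\epsilon p)|^2/(p^2+\mu_2^2)$.

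Next we use that $\eta \ge 0$ with $\int \eta = 1$. This gives $\widehat{\eta}(0) = 1$, and since $\widehat{\eta}$ is continuous there is a $\delta > 0$ with $|\widehat{\eta}(k)|^2 \ge 1/4$ for $|k| \le \delta$. We then restrict the $p$-integral to the ball $|p| \le \delta/\epsilon$, which leaves
\begin{align}
    C \ge \frac{r}{4}\,\frac{\mathrm{Vol}(S^{d-1})}{(2\pi)^d}\int_0^{\delta/\epsilon} \frac{k^{d-1}\,dk}{k^2+\mu_2^2} \, .
\end{align}
To evaluate this radial integral, we split it at $k = \mu_2$, or at $k = 1$ if $\mu_2 = 0$. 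On the outer region, $k^2 + \mu_2^2 \le 2k^2$, so the integrand is at least $k^{d-3}/2$, while the inner region contributes a non-negative amount that we can drop. For $d = 2$ the outer region gives $\tfrac12\log(\delta/(\epsilon\mu_2))$, which has the form $A\log(1/\epsilon) - B$. For $d > 2$ it gives $\frac{1}{2(d-2)}\big[(\delta/\epsilon)^{d-2} - \mu_2^{d-2}\big]$, which has the form $A\epsilon^{2-d} - B$. These bounds hold once $\epsilon < \delta/\mu_2$, which is the meaning of ``small $\epsilon$''.

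We expect no serious obstacle. The one point needing care is choosing the compact mass window so that the upper mass cutoff $\mu_2$ is finite while $\rho$ gives it positive weight. This is always possible, because $\rho((0,\infty)) > 0$ or $\rho(\{0\}) > 0$ implies some bounded interval carries positive mass. A massless piece only helps the bound, since $1/p^2$ dominates $1/(p^2+\mu_2^2)$. We also note that $A$ depends only on $r$, $d$ and $\delta$, and so is independent of $x$, as expected from translation invariance.
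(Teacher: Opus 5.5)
Your proposal is correct and follows essentially the same route as the paper's proof. You substitute the mollifier into the K\"all\'en-Lehmann form, restrict to a bounded spectral window of positive weight, use continuity of $\widehat{\eta}$ at the origin to restrict to the ball $|p|\le\delta/\epsilon$, and bound the radial integral beyond the scale $\mu_2$ (the paper's $M$). Your explicit choice of a window with positive $\rho$-mass is slightly more careful than the paper, which invokes only local finiteness when fixing $M$.
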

This is nothing but the familiar logarithmic divergence of scalar correlation functions in two spacetime dimensions and corresponding power-law divergence in higher dimensions. In particular, for any $d \geq 2$ and any non-trivial K\"all\'en-Lehmann spectral measure, the coincident-point limit (\ref{coincident_point_limit}) diverges as $\epsilon \to 0$. 

\begin{proof}
    Because $\rho$ is
    locally finite, there exists some $M > 0$ such that
    \begin{align}\label{finite_spectral_interval}
        \rho \left( [0,M^2] \right) < \infty \, .
    \end{align}
    Let the (finite, positive) value of the measure assigned to this interval be $m_M = \rho \left( [0,M^2] \right)$. The Fourier transform of the mollifier (\ref{mollifier_defn}) is
    \begin{align}
        \widehat{\eta_{\epsilon,x}} ( p )
        = e^{- i p \cdot x} \widehat{\eta} ( \epsilon p ) \, .
    \end{align}
    In particular, the phase associated with the point $x$ drops out of its
    absolute value. Substituting this expression into the K\"all\'en-Lehmann
    representation (\ref{main_kallen_lehmann}), we find
    \begin{align}\label{mollified_KL}
        C \left( \eta_{\epsilon,x},\eta_{\epsilon,x} \right)
        =
        \int_0^\infty \rho ( d\mu^2 )
        \int_{\mathbb{R}^d}
        \frac{d^d p}{(2\pi)^d}
        \frac{\left| \widehat{\eta} ( \epsilon p ) \right|^2}
        {p^2+\mu^2} \, .
    \end{align}
    Since the integrand and the spectral measure are non-negative, we may
    bound the $\mu^2$ integral by restricting to the interval $[0,M^2]$. On this interval,
    \begin{align}
        \frac{1}{p^2+\mu^2}
        \geq
        \frac{1}{p^2+M^2} \, ,
    \end{align}
    and therefore
    \begin{align}\label{spectral_lower_bound}
        C \left( \eta_{\epsilon,x},\eta_{\epsilon,x} \right)
        \geq
        m_M
        \int_{\mathbb{R}^d}
        \frac{d^d p}{(2\pi)^d}
        \frac{\left| \widehat{\eta} ( \epsilon p ) \right|^2}
        {p^2+M^2} \, .
    \end{align}
    The normalization of the mollifier implies
    $\widehat{\eta} ( 0 ) = \int d^d x \, \eta ( x ) = 1$. By continuity of
    $\widehat{\eta}$, there consequently exists a constant $\delta>0$ such
    that
    \begin{align}
        \left| \widehat{\eta} ( k ) \right|^2
        \geq \frac{1}{2}
        \qquad \text{whenever} \qquad |k| \leq \delta \, .
    \end{align}
    It follows that $\left| \widehat{\eta} ( \epsilon p ) \right|^2
    \geq \frac{1}{2}$ throughout the momentum-space ball
    $|p| \leq \frac{\delta}{\epsilon}$. Thus
    \begin{align}\label{momentum_ball_lower_bound}
        C \left( \eta_{\epsilon,x},\eta_{\epsilon,x} \right)
        &\geq
        \frac{m_M}{2}
        \int_{|p|\leq \frac{\delta}{\epsilon}}
        \frac{d^d p}{(2\pi)^d}
        \frac{1}{p^2+M^2}
        \nonumber \\
        &=
        K_d
        \int_0^{\delta/\epsilon}
        dr \,
        \frac{r^{d-1}}{r^2+M^2} \, ,
    \end{align}
    where we have introduced the constant
    \begin{align}
        K_d \equiv
        \frac{m_M \operatorname{vol} ( S^{d-1} )}
        {2 (2\pi)^d}
        >0 \, .
    \end{align}
    
    We now distinguish the cases $d=2$ and $d>2$. In two dimensions, the
    remaining radial integral can be evaluated explicitly:
    \begin{align}
        \int_0^{\delta/\epsilon}
        dr \, \frac{r}{r^2+M^2}
        &=
        \frac{1}{2}
        \log \left(
        1+\frac{\delta^2}{M^2\epsilon^2}
        \right)
        \nonumber \\
        &\geq
        \log \left( \frac{\delta}{M\epsilon} \right) \, .
    \end{align}
    Therefore, after absorbing the finite constant
    $K_2 \log ( \delta/M )$ into an $\epsilon$-independent constant $B$, we
    obtain
    \begin{align}
        C \left( \eta_{\epsilon,x},\eta_{\epsilon,x} \right)
        \geq
        A \log \left( \frac{1}{\epsilon} \right) - B
        \qquad (d=2)
    \end{align}
    for some $A>0$.

    For $d>2$, take $\epsilon$ sufficiently small that
    $\frac{\delta}{\epsilon} \geq M$. Since
    $r^2+M^2 \leq 2r^2$ whenever $r\geq M$, equation
    (\ref{momentum_ball_lower_bound}) gives
    \begin{align}
        \int_0^{\delta/\epsilon}
        dr \, \frac{r^{d-1}}{r^2+M^2}
        &\geq
        \int_M^{\delta/\epsilon}
        dr \, \frac{r^{d-1}}{r^2+M^2}
        \nonumber \\
        &\geq
        \frac{1}{2}
        \int_M^{\delta/\epsilon}
        dr \, r^{d-3}
        \nonumber \\
        &=
        \frac{1}{2(d-2)}
        \left[
        \delta^{d-2}\epsilon^{2-d}-M^{d-2}
        \right] \, .
    \end{align}
    Absorbing the final, $\epsilon$-independent term into $B$ therefore yields
    \begin{align}
        C \left( \eta_{\epsilon,x},\eta_{\epsilon,x} \right)
        \geq
        A \epsilon^{2-d}-B
        \qquad (d>2)
    \end{align}
    for another constant $A>0$. This completes the second half of (\ref{coincident_point_limit}).
\end{proof}

We mention in passing that the same argument also shows directly that the momentum-space K\"all\'en-Lehmann function (\ref{momentum_space_KL}), which we repeat for convenience,
\begin{align}
G ( p ) = \int_0^\infty \frac{\rho ( d \mu^2 )}{p^2 + \mu^2} \, ,
\end{align}
is not integrable over momentum space. Indeed, using the finite interval of positive spectral weight introduced above, with $m_M = \rho ( [0,M^2] ) > 0$, positivity gives
\begin{align}
G ( p )
\geq
\int_{[0,M^2]}
\frac{\rho ( d \mu^2 )}{p^2+\mu^2}
\geq
\frac{m_M}{p^2+M^2} \, .
\end{align}
Consequently,
\begin{align}\label{G_integral_diverges}
\int_{\mathbb{R}^d} d^d p \, G ( p )
&\geq
m_M \operatorname{vol} \left( S^{d-1} \right)
\int_0^\infty dr \,
\frac{r^{d-1}}{r^2+M^2}
= \infty
\qquad ( d \geq 2 ) \, .
\end{align}
The final radial integrand behaves as $r^{d-3}$ at large $r$, producing a logarithmic divergence (for $d=2$) and a power-law divergence (for $d>2$). Thus the divergence of $\int d^d p \, G(p)$ is simply the unsmeared version of the same ultraviolet estimate responsible for the coincident-point limit above.

As the simple arguments above make clear, any putative NN-FT realization of a $d \geq 2$ theory with a non-trivial K\"all\'en-Lehmann spectral measure which lacks the coincident-point divergences ensured by (\ref{coincident_point_limit}) must violate at least one of the Osterwalder-Schrader assumptions of temperedness, Euclidean invariance, and reflection positivity. More precisely, any pair $(\phi_\theta,P(\theta))$ with bounded mollified coincident variance,
\begin{align}
    \lim_{\epsilon \to 0} \langle \phi_\theta ( \eta_{\epsilon, x} ) \phi_\theta ( \eta_{\epsilon, x} ) \rangle < \infty \, ,
\end{align}
is incompatible with a non-trivial K\"all\'en-Lehmann representation. This conclusion is general and architecture-independent, modulo the caveats that we have already mentioned, such as working with centered fields.

It is potentially more interesting to consider conditions under which such $N < \infty$ architectures have this finite smeared variance property, at least at a mathematical level, since operationally any finite number of random network evaluations on a computer will have finite variance.

A short estimate is sufficient to cover a broad class of finite-width architectures. We continue to work with the centered field, and assume that each parameter draw defines an ordinary function which can be integrated against the mollifier. Suppose that the pointwise variance is bounded in some compact neighborhood \(K\) of \(x\),
\begin{align}\label{locally_bounded_variance}
    V_K
    \equiv
    \sup_{y\in K}
    \mathbb{E}\!\left[
        \left|\phi_\theta(y)\right|^2
    \right]
    <\infty \, .
\end{align}
For all sufficiently small \(\epsilon\), the support of
\(\eta_{\epsilon,x}\) lies inside \(K\). Since the mollifier is non-negative
and has unit integral, the Cauchy--Schwarz inequality gives
\begin{align}\label{finite_mollified_variance_bound}
    C\left(\eta_{\epsilon,x},\eta_{\epsilon,x}\right)
    &=
    \mathbb{E}\!\left[
        \left|
        \int_{\mathbb{R}^d}d^d y\,
        \eta_{\epsilon,x}(y)\phi_\theta(y)
        \right|^2
    \right]
    \nonumber\\
    &\leq
    \int_{\mathbb{R}^d}d^d y\,
    \eta_{\epsilon,x}(y)
    \mathbb{E}\!\left[
        \left|\phi_\theta(y)\right|^2
    \right]
    \leq V_K \, .
\end{align}
The right side is independent of \(\epsilon\), so the mollified coincident
variance cannot diverge as \(\epsilon\to0\). In a Euclidean-invariant
ensemble the pointwise variance is independent of position, and the
assumption (\ref{locally_bounded_variance}) reduces simply to finite
variance at any one point.

To see that this condition includes familiar finite-width networks,
consider
\begin{align}
    \phi_\theta(x)
    =
    c+\sum_{i=1}^{N}
    a_i\sigma_i(w_i\cdot x+b_i) \, .
\end{align}
It is enough that \(c\) have finite second moment and that each of the
finitely many neuron outputs have second moment bounded on \(K\). Indeed,
\begin{align}
    \sup_{y\in K}
    \mathbb{E}\!\left[
        \left|\phi_\theta(y)\right|^2
    \right]
    \leq
    (N+1)
    \left(
        \mathbb{E}[|c|^2]
        +
        \sum_{i=1}^{N}
        \sup_{y\in K}
        \mathbb{E}\!\left[
            \left|
            a_i\sigma_i(w_i\cdot y+b_i)
            \right|^2
        \right]
    \right)
    <\infty \, .
\end{align}
No independence assumption is required. This condition holds, for
instance, for bounded activations and finite-variance output weights. It
also holds for ReLU and other activations of at most polynomial growth when
the parameter density has the corresponding finite moments, including the
usual Gaussian parameter densities. The same reasoning applies to any
fixed finite-depth architecture whose output obeys the local moment bound
(\ref{locally_bounded_variance}).

Consequently, every finite network in this broad finite-moment class has
bounded mollified coincident variance and cannot reproduce the non-trivial
K\"all\'en-Lehmann behavior of
(\ref{coincident_point_limit}). The finite-\(N\) qualification alone is not
sufficient: heavy-tailed parameter densities, singular
parameter-dependent amplitudes, or regulator limits can violate
(\ref{locally_bounded_variance}) even for a single neuron. We now turn to
explicit examples of these possibilities.

\subsection{Infinite-Variance Examples}\label{sec:counterexample}

It is instructive to clarify the role of NN-FT architectures which violate the local finite-variance condition of the preceding subsection, since it illustrates that a finite, or even single-neuron, network can produce a two-point function with coincident-point divergences, if one chooses an architecture and parameter density which violate this condition. In particular, such a finite-width network can reproduce a reflection-positive $2$-point function. 

There are several ways that one might do this. One is implicit in previous studies of the cos-net architecture which realizes the free scalar \cite{Halverson:2021aot,Frank:2026bui}. Here we specialize to $d=2$. The architecture takes the form
\begin{align}\label{free_scalar_cosnet_arch}
    \phi_\theta ( x ) = \frac{C}{\sqrt{N}} \sum_{i=1}^{N} \frac{a_i}{|W_i|} \cos \left( W_i \cdot x + c_i \right) \, , \qquad C^2 = \frac{\alpha' \left( \Lambda^2 - \epsilon^2 \right)}{\sigma_a^2}
\end{align}
with $a_i$ Gaussian with variance $\sigma_a^2$, $c_i$ uniform, and $W_i$ chosen uniform on the annulus
\begin{align}
    \epsilon < | W_i | < \Lambda \, .
\end{align}
Here $\epsilon$ plays the role of an IR regulator whereas $\Lambda$ is a UV cutoff. For finite $\Lambda$, this architecture satisfies the local finite-variance condition (\ref{locally_bounded_variance}). On any compact set $K \subset \mathbb{R}^d$,
\begin{align}
    \mathop{\mathrm{sup}}\limits_{x \in K} \left| \frac{C a_i}{\sqrt{N} | W_i |} \cos \left( W_i \cdot x + c_i \right) \right| \leq \frac{C | a_i |}{\sqrt{N} \epsilon} \, ,
\end{align}
and thus in the notation of the preceding subsection, one has
\begin{align}
    \mathop{\mathrm{sup}}\limits_{x \in K} | \phi_\theta ( x ) | \leq \frac{C}{\epsilon \sqrt{N}} \sum_{i=1}^{N} | a_i | \, ,
\end{align}
and hence the local finite-variance condition (\ref{locally_bounded_variance}) holds for every $0 < \epsilon < \Lambda < \infty$. Thus this architecture has finite variance and lacks coincident point divergences, which is manifest as this variance can be computed straightforwardly:
\begin{align}\label{exact_variance}
    \mathbb{E} \left[ \phi ( x )^2 \right] = \alpha' \log \left( \frac{\Lambda}{\epsilon} \right) \, .
\end{align}
However, the exact expression (\ref{exact_variance}) also makes it clear that this finite variance is a regulator effect. As either $\epsilon \to 0$ or $\Lambda \to \infty$, the variance (\ref{exact_variance}) diverges, and the architecture formally violates the local finite-variance condition (\ref{locally_bounded_variance}), even at finite $N$. The mechanisms by which these two limits produce infinite variance are different. In the $\Lambda \to \infty$ limit, one formally wishes to draw momentum from a uniform distribution on an annulus of infinite size, which is an improper prior that na\"ively assigns equal weight to arbitrarily large momentum modes. In the $\epsilon \to 0$ limit for finite $\Lambda$, in contrast, the parameter density remains normalizable but the variance divergence is due to the expectation value
\begin{align}
    \mathbb{E} \left[ \frac{1}{|W|^2} \right] = \frac{2}{\Lambda^2 - \epsilon^2} \int_{\epsilon}^{\Lambda} \frac{dr}{r} \, ,
\end{align}
which is logarithmically divergent as $\epsilon \to 0$. Thus, although this architecture reproduces the standard free boson propagator at finite $N$ for separations $\frac{1}{\Lambda} \ll r \ll \frac{1}{\epsilon}$, one must be careful about certain order-of-limits issues regarding the regulators.

A related single-neuron construction is the following. Suppose that we wish to realize the two-point function of a theory with a K\"all\'en-Lehmann representation
\begin{align}\label{target_rep_to_match}
    G ( p ) = \int_0^\infty \frac{\rho ( d \mu^2 )}{p^2 + \mu^2} \, ,
\end{align}
with the properties that $G ( p ) < \infty$ and that
\begin{align}\label{finite_KL_assumption}
    \int \frac{d^d p}{(2 \pi)^d} G ( p ) | \widehat{f} ( p ) |^2 < \infty
\end{align}
for the Fourier transform $\widehat{f} ( p )$ of any $f \in \mathcal{S} ( \mathbb{R}^d )$. These conditions are satisfied, for instance, by a massive free scalar field in $d \geq 2$ dimensions. More generally, by Tonelli's theorem, the condition (\ref{finite_KL_assumption}) holds for any theory whose distributional two-point function $C ( f, g ) = \mathbb{E} \left[ \overline{\phi ( f )} \phi ( g ) \right]$ obeys $C ( f, f ) < \infty$ for all $f \in \mathcal{S} ( \mathbb{R}^d )$. We note that this finiteness condition is distinct from the existence of coincident-point divergences, which involve limits $C ( \eta_{\epsilon, x} , \eta_{\epsilon, x} )$ as $\epsilon \to 0$, as discussed around (\ref{coincident_point_limit}).

Now consider the single-neuron architecture
\begin{align}\label{single_layer_example}
    \phi_\theta ( x ) = A ( W ) \cos ( W \cdot x + b ) \, , \qquad A ( p ) = \sqrt{ \frac{2 G ( p ) }{( 2 \pi )^d q ( p ) }} \, ,
\end{align}
where $\theta \in \{ W, b \}$, $W$ is drawn from any even, strictly positive probability distribution $q ( W )$ on $\mathbb{R}^d$ and $B$ is drawn uniformly on $[0, 2 \pi]$. This simple example has two properties.

\begin{enumerate}[label = (\alph*)]
\item\label{single_neuron_divergences} First, it is able to reproduce coincident-point divergences. Indeed, beginning from
\begin{align}
    \mathbb{E} \left[ \phi_\theta ( x )^2 \right] = \int d \theta \, P ( \theta ) \, A ( W )^2 \cos^2 \left( W \cdot x + b \right) \, ,
\end{align}
averaging over the uniform phase $B$ gives the average value $\mathbb{E}_B \left[ \cos^2 \left( W \cdot x + b \right) \right] = \frac{1}{2}$, and we are left with the remaining expectation over $W$,
\begin{align}
    \mathbb{E} \left[ \phi_\theta ( x )^2 \right] &= \frac{1}{2} \int d^d W \, q ( W ) \, \frac{2 G ( W ) }{( 2 \pi )^d q ( W ) } \nonumber \\
    &= \int \frac{d^d W}{(2 \pi)^d} G ( W ) = \infty \, ,
\end{align}
precisely because of the infinite integral of the K\"all\'en-Lehmann representation discussed around equation (\ref{G_integral_diverges}).

\item Second, this architecture can match the two-point function of a
``target'' theory with KL representation (\ref{target_rep_to_match}). Let
$f,g\in\mathcal{S}(\mathbb{R}^d)$. With the Fourier-transform convention
used above, the smeared neuron is
\begin{align}
    \phi_{\theta}(f)
    =
    \frac{A(W)}{2}
    \left[
        e^{iB}\widehat{f}(-W)
        +
        e^{-iB}\widehat{f}(W)
    \right] \, .
\end{align}
Averaging over the uniform phase removes the terms proportional to
$e^{\pm 2iB}$, so that
\begin{align}
    \mathbb{E}
    \left[
        \overline{\phi_\theta (f)} \phi_\theta (g)
    \right]
    &=
    \frac{1}{2}
    \int_{\mathbb{R}^d}
    \frac{d^d W}{(2\pi)^d}
    G(W)
    \left[
        \overline{\widehat{f}(-W)}\widehat{g}(-W)
        +
        \overline{\widehat{f}(W)}\widehat{g}(W)
    \right]
    \nonumber\\
    &=
    \int_{\mathbb{R}^d}
    \frac{d^d p}{(2\pi)^d}
    G(p)\,
    \overline{\widehat{f}(p)}\widehat{g}(p) \, .
\end{align}
Here we used the evenness of $G$, which follows from
(\ref{target_rep_to_match}), and changed variables $W\to-W$ in the first
term. Substituting the KL representation of $G$ then gives
\begin{align}
    \mathbb{E}
    \left[
        \overline{\phi_\theta (f)}\phi_\theta (g)
    \right]
    =
    \int_0^\infty \rho(d\mu^2)
    \int_{\mathbb{R}^d}
    \frac{d^d p}{(2\pi)^d}
    \frac{
        \overline{\widehat{f}(p)}\widehat{g}(p)
    }{p^2+\mu^2}
    =
    C(f,g) \, .
\end{align}
Thus the single-neuron network is able to correctly match the smeared two-point functions of a theory with a given KL representation, while preserving the divergent variances of point \ref{single_neuron_divergences} when the functions $f$ and $g$ approach coincident delta functions.

\end{enumerate}

The lesson of the two examples (\ref{free_scalar_cosnet_arch}) and (\ref{single_layer_example}) is the following: it \emph{is} possible to realize the two-point function of a reflection-positive quantum field theory with a finite-width (or even width-$1$) architecture, if one is willing to relax the local finite-variance condition (\ref{locally_bounded_variance}) presented earlier in this Section. This is related to the observation, first made in \cite{Halverson:2021aot}, that one is free to adjust the functional dependence of output weights on input weights in a cos-net architecture in order to tune the power spectrum of the theory. Of course, as we have already emphasized, practical Monte Carlo sampling on a computer always produces finite-variance random variables. There is therefore a second, operational question of whether using finite-variance approximations to architectures with infinite pointwise variance -- perhaps along with suitable regularization and mollification -- can be useful in practice. It appears that this is the case, at least in NN-FT simulations of Liouville \cite{Ferko:2026axm} and Maxwell \cite{toappear} theory, but we leave a more systematic investigation of this question to future work.

Let us conclude this subsection with a quantum-mechanical analogue which \emph{does} have finite variance, but illustrates a related point. We mentioned that, in Section \ref{sec:qm}, we restricted to fixed-feature architectures that do not have random input weights, and showed that such finite-$N$ architectures are incompatible with the $1d$ K\"all\'en-Lehmann representation. By repeating the single-neuron construction (\ref{single_layer_example}) with an appropriate KL representation for a $1d$ theory, it is again possible to reproduce a desired (reflection-positive) two-point function, although generically higher-point functions will not be reflection positive and/or the theory will violate cluster decomposition. For instance, choosing the architecture $x(t) = \sqrt{2 c} \cos ( W t + b )$ with $b \sim U ( [ 0, 2 \pi ] )$ and $W \sim \frac{m d W}{\pi ( W^2 + m^2 )}$ reproduces the harmonic oscillator two-point function $\langle x ( t ) x ( s ) \rangle = c e^{- m | t - s |}$, but not its higher-point functions.

\subsection{Momentum-Space Analysis}\label{sec:momentum_space}

We have just seen in Section \ref{sec:counterexample} that a finite-width network can reproduce a reflection-positive K\"all\'en-Lehmann two-point function if one allows sufficiently singular parameter statistics. In particular, the width-$1$ architecture (\ref{single_layer_example}) distributes the momentum $W$ over the ensemble so that its \emph{averaged} covariance has support throughout momentum space, even though each individual field draw contains only the two momenta $\pm W$. The two-point function alone does not distinguish these two notions of momentum support. We will now show that cluster decomposition supplies precisely the additional input that is needed: in a clustered QFT, every open momentum region occurs in a typical field draw.

We first recall a probabilistic consequence of clustering which will be important below. Let $\tau_y \phi$ denote the field translated by $y$. We will use a strong version of the cluster axiom related to the notion referred to as \emph{mixing} in the stochastic process literature,
\begin{align}\label{cluster_decomposition_mixing}
    \mathbb{E}\left[
        F(\phi) G(\tau_y \phi)
    \right]
    \longrightarrow
    \mathbb{E}\left[F(\phi)\right]
    \mathbb{E}\left[G(\phi)\right]
    \qquad \text{as } |y| \to \infty \, ,
\end{align}
for bounded observables $F$ and $G$ of a field draw. Physically, this is essentially the statement that interactions (and thus correlations) grow weak at large spatial separations; mathematically, one says that the field measure is \emph{ergodic under translations}.

This clustering property implies the following simple fact about translation-invariant events.\footnote{In the language of measure theory, by ``event'' we mean an element of the $\sigma$-algebra of measurable sets.} If $A$ is a translation-invariant event, we may take $F=G=\mathbf{1}_A$ in (\ref{cluster_decomposition_mixing}) to obtain
\begin{align}
    \mathbb{P}(A)
    =
    \mathbb{P}(A)^2 \, .
\end{align}
Therefore $\mathbb{P}(A)$ is either $0$ or $1$. Said differently, a theory exhibiting cluster decomposition cannot involve a non-trivial probability distribution over sectors distinguished by translation-invariant data; each translation-invariant event is ``all-or-nothing''. This is the only consequence of cluster decomposition that we will use in this subsection.

We now apply this observation to the momentum content of a field configuration. As in (\ref{main_kallen_lehmann}), we work with the centered field; subtracting the constant one-point function can only change the contribution at zero momentum. If a network reproduced the QFT correlation functions, its one-point function would be the same constant, so we may center the network at the same time without changing the support argument.

\begin{lemma}\label{full_momentum_support}
Let $\phi$ be a centered, translation-ergodic random Schwartz distribution whose connected two-point function has the non-trivial K\"all\'en-Lehmann form (\ref{main_kallen_lehmann}). Then
\begin{align}\label{typical_full_momentum_support}
    \operatorname{supp}\widehat{\phi}
    =
    \mathbb{R}^d
    \qquad \text{with probability one} \, .
\end{align}
\end{lemma}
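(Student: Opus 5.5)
We prove Lemma \ref{full_momentum_support} by showing that, for every fixed open set $U\subset\mathbb{R}^d$, the event $A_U=\{\operatorname{supp}\widehat{\phi}\cap U\neq\emptyset\}$ is translation invariant, has positive probability, and therefore has probability one. Intersecting over a countable base of open balls then gives full support almost surely.

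First, translation invariance. Translation by $y$ multiplies the Fourier transform by the phase $e^{-ip\cdot y}$, i.e.\ $\widehat{\tau_y\phi}=e^{-ip\cdot y}\widehat{\phi}$. Since this phase is smooth and nowhere vanishing, $\operatorname{supp}\widehat{\tau_y\phi}=\operatorname{supp}\widehat{\phi}$. Hence the support of $\widehat\phi$, and every event defined only through it, is translation invariant. For measurability, write $A_U$ as a countable union over a dense countable family of test functions $g$ with $\operatorname{supp}\widehat g\subset U$ of the events $\{\phi(g)\neq 0\}$. This uses that $\widehat\phi$ vanishes on $U$ if and only if $\phi(g)=0$ for all such $g$, together with continuity of $\phi$ and density in the Schwartz topology. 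By the all-or-nothing consequence of (\ref{cluster_decomposition_mixing}), $\mathbb{P}(A_U)\in\{0,1\}$.

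Second, positivity of $\mathbb{P}(A_U)$ via the K\"all\'en--Lehmann form (\ref{main_kallen_lehmann}). Choose a nonzero $g\in\mathcal{S}$ with $\widehat g\in C_c^\infty(U)$. Then
\begin{align*}
\mathbb{E}|\phi(g)|^2=\int_0^\infty\rho(d\mu^2)\int\frac{d^dp}{(2\pi)^d}\frac{|\widehat g(p)|^2}{p^2+\mu^2}>0,
\end{align*}
because $\rho$ is non-trivial and the inner integral is strictly positive for every $\mu$; this is the same positivity argument as in Lemma \ref{positiveQMCovariance}. So $\phi(g)\neq0$ with positive probability. On the other hand, if $\widehat\phi$ vanished on $U$ then $\phi(g)=(2\pi)^{-d}\langle\widehat\phi,\widehat g(-\,\cdot)\rangle$ would vanish, with appropriate sign conventions: choose $U$ symmetric, or note that reality of $\phi$ makes $\operatorname{supp}\widehat\phi$ symmetric under $p\to-p$. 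Thus $\mathbb{P}(A_U)>0$, hence $\mathbb{P}(A_U)=1$.

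Finally, let $\{U_n\}$ be the countable family of balls with rational centers and radii. Then $\mathbb{P}\bigl(\bigcap_n A_{U_n}\bigr)=1$. On this event the closed set $\operatorname{supp}\widehat\phi$ meets every rational ball, so it is dense and closed, and therefore equals $\mathbb{R}^d$. This establishes (\ref{typical_full_momentum_support}).

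The main obstacle is the measure-theoretic bookkeeping in the first step. One must check that $A_U$ is a measurable event in the cylinder $\sigma$-algebra on $\mathcal{S}'$, which needs separability of the relevant test-function space, and that it is exactly invariant rather than only invariant almost surely. One must also be careful with Fourier sign conventions and the reflection $p\to-p$, which is handled by the reality of $\phi$. The positivity step is routine given the earlier lemmas.
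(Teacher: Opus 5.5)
Your proposal is correct and follows the paper's proof essentially step for step. It uses the same translation-invariant event (you take its complement, $\operatorname{supp}\widehat{\phi}\cap U\neq\emptyset$, where the paper takes ``$\widehat{\phi}$ vanishes on $U$''), the same zero-one law from the mixing form of clustering, the same K\"all\'en--Lehmann positivity of $\mathbb{E}|\phi(g)|^2$ for $\widehat{g}$ compactly supported in $U$ (up to the reflection $p\to-p$), and the same countable intersection over rational balls. The one cosmetic difference is that the paper also keeps $\widehat{f}$ supported away from $p=0$ to remove a possible $c_0$ zero-momentum term; in your version that term is harmless, since $c_0\geq 0$ can only increase the variance.
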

%
% \aaron{AM: \st{The conclusion can be stated more intuitively -- but somewhat less precisely -- in words.}} 
Intuitively, for any QFT obeying all of the Osterwalder-Schrader axioms (subject to the usual caveats about centering, non-trivial $\rho$, etc.), given any momentum vector $p \in \mathbb{R}^d$, a typical random field draw will always have a momentum distribution including \emph{some} continuum of momenta near $p$. Here one must be slightly careful: because the Fourier transform is an automorphism of $\mathcal{S} ( \mathbb{R}^d )$ onto itself, duality allows the Fourier transform to be defined on all Schwartz distributions in $\mathcal{S}' ( \mathbb{R}^d )$, but the result is typically another distribution rather than a genuine function. A simple example illustrating this point is that the Fourier transform of a cosine is a sum of delta distributions. The more precise statement is that the Fourier transform of any fully-OS random field draw \emph{when smeared against an appropriate test function localized near a given momentum} is non-zero.

\begin{proof}
Fix a non-empty open ball $U \subset \mathbb{R}^d$ and consider the event
\begin{align}\label{missing_momentum_event}
    A_U
    =
    \left\{
        \widehat{\phi}
        \text{ vanishes as a distribution on } U
    \right\} \, .
\end{align}
This is a measurable event; for instance, it can be written using a countable separating family of smooth test functions supported in $U$. A translation acts in momentum space by multiplication with a phase,
\begin{align}\label{translation_momentum_phase}
    \widehat{\tau_y \phi}(p)
    =
    e^{- i p \cdot y}\widehat{\phi}(p) \, .
\end{align}
As this phase is nowhere vanishing, translations cannot create or remove Fourier support. Thus $A_U$ is translation invariant, and ergodicity implies that $\mathbb{P}(A_U)$ is either $0$ or $1$.

We next show that this probability cannot be $1$. Choose a non-zero test function $f \in \mathcal{S}(\mathbb{R}^d)$ whose Fourier transform has compact support in $-U$, with the support chosen away from $p=0$. Up to the harmless reflection associated with the Fourier-transform convention, $\phi(f)$ pairs $\widehat{\phi}$ with a test function supported in $U$. If $A_U$ occurred almost surely, then $\phi(f)$ would vanish almost surely and would therefore have zero variance. The possible additional zero-momentum term discussed after (\ref{main_kallen_lehmann}) makes no contribution because $\widehat f(0)=0$, while the propagating part gives
\begin{align}\label{positive_momentum_variance}
    \mathbb{E}\left[|\phi(f)|^2\right]
    & =
    \int_0^\infty \rho(d\mu^2)
    \int_{\mathbb{R}^d}
    \frac{d^d p}{(2\pi)^d}
    \frac{|\widehat{f}(p)|^2}{p^2+\mu^2}
    \nonumber \\
    & > 0 \, .
\end{align}
The strict inequality follows because $\rho$ is a non-zero positive measure and $\widehat f$ is not identically zero. We conclude that $\mathbb{P}(A_U)=0$.

Finally, take the countable collection of momentum-space balls with rational centers and rational radii. With probability one, $\widehat{\phi}$ vanishes on none of these balls. Since every non-empty open set contains such a ball, this is equivalent to (\ref{typical_full_momentum_support}).
\end{proof}

The role of clustering in Lemma \ref{full_momentum_support} is worth emphasizing. The K\"all\'en-Lehmann two-point function implies only that a given momentum region cannot be absent from \emph{every} draw. Translation ergodicity strengthens this statement by showing that the region is absent from almost no draws. This is precisely the step which fails for the random plane-wave ensemble (\ref{single_layer_example}), where the sampled momentum $W$ is a translation-invariant hidden label.

Let us now compare this conclusion with the single-layer architecture (\ref{single_layer}),
\begin{align}\label{finite_momenta_example}
    \phi_\theta(x)
    =
    \sum_{i=1}^{N}
    w_i \sigma_i(a_i \cdot x+b_i) \, .
\end{align}
Here $a_i \in \mathbb{R}^d$ is the vector of input weights, while $w_i$ is the scalar output weight. We assume only that the activations are locally integrable and grow at most polynomially, so that the neurons define Schwartz distributions. These mild conditions include the usual bounded, polynomial, and ReLU-type activations; no independence or finite-moment assumption on the parameters is required.

The main physical observation is that any such field draw of the architecture (\ref{finite_momenta_example}) has momenta which are supported at finitely many lines, proportional to the drawn values of the input weights $a_i$. This is unlike a field draw from a fully-OS quantum field theory, and is formalized in the following result.

\begin{lemma}\label{ridge_momentum_lines}
For every fixed parameter draw in (\ref{single_layer}),
\begin{align}\label{finite_network_momentum_support}
    \operatorname{supp}\widehat{\phi_\theta}
    \subseteq
    \bigcup_{i=1}^{N} \mathbb{R}a_i \, .
\end{align}
\end{lemma}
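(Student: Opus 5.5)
By linearity of the Fourier transform on $\mathcal{S}'(\mathbb{R}^d)$, and because the support of a finite sum of distributions is contained in the union of the supports, it suffices to show that a single ridge function $\psi(x) = \sigma(a\cdot x + b)$ has $\operatorname{supp}\widehat{\psi} \subseteq \mathbb{R}a$. The scalar output weights $w_i$ only rescale each term and do not affect supports. The standing hypothesis that $\sigma$ is locally integrable with at most polynomial growth guarantees that $\psi$ defines a tempered distribution, so $\widehat{\psi}$ is well defined. The degenerate case $a=0$ is immediate: then $\psi$ is the constant $\sigma(b)$, whose transform is a multiple of $\delta$, supported at $\{0\} = \mathbb{R}a$.

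For $a \neq 0$, I would use a duality argument rather than computing $\widehat{\psi}$ explicitly. Fix $g \in \mathcal{S}(\mathbb{R}^d)$ with $\operatorname{supp} g \subseteq \mathbb{R}^d \setminus \mathbb{R}a$; it is enough to show $\widehat{\psi}(g) = \psi(\widehat{g}) = 0$ for such $g$. (For the support statement one may take $g \in C_c^\infty$ with compact support in the open complement of the line.) Choose coordinates $x = s\,\hat a + y$ with $\hat a = a/|a|$ and $y \perp a$. Since $\psi$ depends only on $s$, Fubini (justified by the polynomial growth of $\sigma$ and the rapid decay of $\widehat g$) gives
\begin{align}
    \psi(\widehat g) = \int_{\mathbb{R}} ds \, \sigma(|a| s + b) \int_{a^\perp} d^{d-1} y \, \widehat g(s \hat a + y) \, .
\end{align}
By the projection--slice theorem, the inner integral is, up to a constant, the one-dimensional inverse Fourier transform in $s$ of the restriction $k \mapsto g(k \hat a)$. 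This restriction vanishes identically because $g$ vanishes on $\mathbb{R}a$. Hence the inner integral is zero for every $s$, and therefore $\psi(\widehat g) = 0$.

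An alternative route is to note that $\psi$ is annihilated by every directional derivative $v\cdot\nabla$ with $v \perp a$, in the sense of distributions. Fourier transforming then gives $(v\cdot p)\,\widehat\psi = 0$ for all $v\perp a$. So $\widehat\psi$ is supported on the common zero set of these linear functions, which is exactly $\mathbb{R}a$. This avoids the Fubini step, but it requires the standard fact that if $h\,T = 0$ for a smooth $h$, then $\operatorname{supp}T \subseteq \{h=0\}$. That fact is elementary: on the open set where $h\neq0$, divide by $h$ locally.

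The main obstacle is only technical: justifying the interchange of integrals, or the division argument, for a general polynomially bounded, locally integrable $\sigma$. I would prefer the derivative argument because it needs no integrability beyond $\psi \in \mathcal{S}'$. Summing over $i$ then yields (\ref{finite_network_momentum_support}).
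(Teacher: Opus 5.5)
Your proposal is correct. Your first route is essentially the paper's argument in dual form. The paper views the neuron as a function $h(s)=\sigma(|a|s+b)$ of the longitudinal coordinate, tensored with the constant $1$ in the $d-1$ transverse directions. It then writes down $\widehat{\psi}(q\hat a+p_\perp)=(2\pi)^{d-1}\widehat{h}(q)\,\delta^{(d-1)}(p_\perp)$. Your Fubini plus projection--slice computation is exactly what you get by pairing this formula with a test function: $\widehat{\psi}(g)=(2\pi)^{d-1}\langle \widehat{h},\, g(\,\cdot\,\hat a)\rangle$, which vanishes when $g$ vanishes on the line.

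Your preferred second route is genuinely different. Instead of computing $\widehat{\psi}$, you use only that $\psi$ is invariant under translations along $a^\perp$, so $(v\cdot p)\,\widehat{\psi}=0$ for every $v\perp a$. The division lemma then confines the support to the common zero set $\mathbb{R}a$.

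The two routes buy different things. The paper's explicit formula gives more information: it identifies the part of $\widehat{\psi}$ living on the line as the one-dimensional transform of the activation, i.e.\ which momenta along $\mathbb{R}a_i$ actually occur. It relies on the standard tensor-product rule $\widehat{h\otimes 1}=\widehat{h}\otimes(2\pi)^{d-1}\delta$. Your invariance argument proves only the inclusion, but it applies verbatim to any tempered distribution invariant under translations by a subspace $W$, giving support in $W^\perp$. That covers, for example, multi-ridge neurons $\sigma(Ax+b)$, whose transform is supported in the row space of $A$.

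One small caution: if you verify $v\cdot\nabla\psi=0$ by integrating by parts against a test function, you reintroduce the same interchange of integrals you wanted to avoid. Instead, use the exact identity $\psi(\cdot+tv)=\psi$, which needs only a change of variables, and differentiate in $t$. Difference quotients of test functions converge in $\mathcal{S}$, so this needs nothing beyond $\psi\in\mathcal{S}'$.
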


\begin{proof}
Consider first a single neuron with $a_i \neq 0$, and set $e_i=a_i/|a_i|$. We decompose position and momentum into longitudinal and transverse components,
\begin{align}
    x=s e_i+y \, , \qquad
    p=q e_i+p_\perp \, , \qquad
    y \cdot e_i=p_\perp \cdot e_i=0 \, .
\end{align}
The neuron depends only on the longitudinal coordinate $s$ and is constant in the $d-1$ transverse directions. Defining $h_i(s)=\sigma_i(|a_i|s+b_i)$, its distributional Fourier transform therefore takes the form
\begin{align}\label{ridge_neuron_fourier_transform}
    \widehat{\sigma_i(a_i \cdot x+b_i)}
    (q e_i+p_\perp)
    =
    (2\pi)^{d-1}
    \widehat{h_i}(q)
    \delta^{(d-1)}(p_\perp) \, .
\end{align}
The transverse delta function restricts the support to the line $\mathbb{R}a_i$. If $a_i=0$, the neuron is constant and its Fourier support is contained in $\{0\}=\mathbb{R}a_i$. Multiplication by the output weight $w_i$ cannot enlarge this support, while the Fourier support of a finite sum is contained in the union of the supports of its terms. This proves (\ref{finite_network_momentum_support}).
\end{proof}

We can now combine these two observations to obtain the desired result.

\begin{theorem}[Momentum-space obstruction]\label{thm:momentum_space_no_go}
Let $d\geq2$, and let $\phi$ be a non-trivial real scalar generalized random field satisfying the Osterwalder--Schrader axioms, with cluster decomposition understood in the mixing form (\ref{cluster_decomposition_mixing}). Suppose that its connected two-point function has the K\"all\'en-Lehmann form (\ref{main_kallen_lehmann}) with $\rho$ not identically zero. Then no finite-width architecture of the form (\ref{single_layer}), with activations satisfying the mild temperedness conditions stated above, can reproduce all of the correlation functions of $\phi$. This conclusion holds for an arbitrary joint probability law of the finitely many network parameters.
\end{theorem}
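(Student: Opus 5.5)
The plan is to play Lemma \ref{full_momentum_support} against Lemma \ref{ridge_momentum_lines}. On the QFT side, every fixed open momentum region is almost surely present in a field draw. On the network side, every draw has Fourier support confined to a finite union of lines through the origin, which has empty interior when $d\geq 2$. The difficulty is that the lines $\mathbb{R}a_i$ are random, so no single ball is missed by every draw. I will fix this with a countability argument. Suppose, for contradiction, that $(\phi_\theta,P)$ reproduces all correlation functions of $\phi$. As in the text, I center both fields with the same constant one-point function; this only affects $p=0$.

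\textbf{Step 1 (a fixed missing ball).} Let $\{U_n\}_{n\in\mathbb{N}}$ be the balls with rational centers and rational radii whose closures avoid the origin. Fix a draw $\theta$. The set $L_\theta=\bigcup_{i=1}^N \mathbb{R}a_i$ is closed and, for $d\geq2$, has empty interior. Hence $\mathbb{R}^d\setminus(L_\theta\cup\{0\})$ is a non-empty open set and contains some $U_n$. By Lemma \ref{ridge_momentum_lines}, $\widehat{\phi_\theta}$ then vanishes as a distribution on that $U_n$. Define the events $A_{U_n}$ as in (\ref{missing_momentum_event}), now for the network field. We have just shown $\bigcup_n A_{U_n}$ has network-probability one. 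By countable subadditivity, some fixed ball $U_0=U_{n_0}$ satisfies $\mathbb{P}_{\rm NN}(A_{U_0})>0$. This requires no assumption on the joint parameter law.

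\textbf{Step 2 (transfer to the QFT).} Choose a countable family $\{g_k\}\subset\mathcal{S}(\mathbb{R}^d)$ whose Fourier transforms are supported in $-U_0$ and which is dense enough that $A_{U_0}=\bigcap_k\{\phi(g_k)=0\}$. This is the same measurability remark made in the proof of Lemma \ref{full_momentum_support}. The event is then determined by the joint laws of finitely many smeared variables $(\phi(g_1),\dots,\phi(g_K))$, through the limit $K\to\infty$. If the network reproduces these finite-dimensional laws, continuity of measure gives $\mathbb{P}_{\rm QFT}(A_{U_0})=\mathbb{P}_{\rm NN}(A_{U_0})>0$. This contradicts Lemma \ref{full_momentum_support}, which uses clustering in the form (\ref{cluster_decomposition_mixing}) together with the non-trivial $\rho$ to give $\mathbb{P}_{\rm QFT}(A_{U_0})=0$.

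\textbf{Main obstacle.} The weak point is Step 2. The hypothesis is equality of correlation functions, that is, of moments of smeared fields, whereas the argument needs equality of the laws of $(\phi(g_1),\dots,\phi(g_K))$. My first approach would be to invoke moment determinacy, as flagged in the remark on laws of random fields at the start of this section. This holds, for instance, when the QFT smeared fields satisfy the usual OS growth bounds on $n$-point functions, which give a Carleman-type condition on moments. If that is unavailable, I would try to encode $A_{U_0}$ directly through moments. By Step 1, the non-negative variable $X=\sum_k c_k|\phi(g_k)|^2$ vanishes on an event of positive network-probability. I would then compare the ratio $\mathbb{E}[X^2]/\mathbb{E}[X]^2$ under translations $\tau_y$, using the moment form of clustering for the four-point function. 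The goal would be to show that a positive-probability ``all-zero'' sector forces $\mathbb{E}[X\,\tau_yX]$ to stay bounded away from $\mathbb{E}[X]^2$ as $|y|\to\infty$. I expect this to require some care, because the network law need not be translation invariant.
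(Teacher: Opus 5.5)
Your Step~1 is correct: it is a careful version of the law-level argument in the paper's first paragraph. The genuine gap is the one you flag in Step~2. The hypothesis is only that the network reproduces all correlation functions, but the identity $\mathbb{P}_{\mathrm{QFT}}(A_{U_0})=\mathbb{P}_{\mathrm{NN}}(A_{U_0})$ needs equality of the finite-dimensional laws of $(\phi(g_1),\ldots,\phi(g_K))$. Moment determinacy is not among the hypotheses: Schwinger-function bounds growing like $(n!)^\beta$ guarantee Carleman's condition only for $\beta\le 1$. The paper's proof is arranged specifically to avoid needing it. Your fallback also does not close as written. Cauchy--Schwarz bounds $\mathbb{E}[Z^2]$ from below for a single variable $Z$ that vanishes on a positive-probability event, but it gives no pointwise-in-$y$ inequality for the cross moment $\mathbb{E}[X\,\tau_yX]$.

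The missing idea in the paper is to use the width $N$ so that the vanishing holds for \emph{every} draw, and therefore shows up in a single moment. Choose $N+1$ symmetric open regions $V_k=-V_k$, with closures away from the origin, such that every line through the origin meets at most one $V_k$. By Lemma~\ref{ridge_momentum_lines}, the $N$ lines of any draw meet at most $N$ of these regions. So for test functions $f_k$ with $\widehat f_k$ supported in $V_k$, the product $\prod_{k=1}^{N+1}\phi_\theta(f_k)^2$ vanishes identically, whatever the parameter law. On the QFT side, apply Lemma~\ref{full_momentum_support} to each $V_k$, use countable separating families, and take a countable union over index tuples. This produces one fixed tuple with $\mathbb{E}[\prod_k\phi(f_k)^2]>0$, i.e.\ a mismatch in a single $(2N+2)$-point function.

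Alternatively, your fallback can be repaired by averaging over translations instead of fixing $y$. Take $g_y(x)=g(x-y)$ with $\widehat g$ supported in $-U_0$, and set $X_y=|\phi(g_y)|^2$ and $Y_R=|B_R|^{-1}\int_{B_R}X_y\,d^dy$. Translations preserve Fourier support, so $Y_R=0$ on $A_{U_0}$ for the network. Meanwhile $\mathbb{E}[Y_R]=C(g,g)=:m>0$ on both sides, so $\operatorname{Var}_{\mathrm{NN}}(Y_R)\ge \mathbb{P}_{\mathrm{NN}}(A_{U_0})\,m^2$. For the QFT, extend mixing from bounded observables to $L^2$ using finite fourth moments; this gives $\operatorname{Cov}(X_0,X_y)\to0$ and hence $\operatorname{Var}(Y_R)\to0$. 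Since $\operatorname{Var}(Y_R)$ is an integral of two- and four-point functions, which agree on both sides, this is a contradiction. Your worry that the network law need not be translation invariant does not matter here, because only matched, hence translation-invariant, moments enter. This variant needs only four-point functions, independent of $N$, whereas the paper's argument uses the finite width directly.
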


\begin{proof}
For $d \geq 2$, a finite union of one-dimensional lines is a proper closed subset of $\mathbb{R}^d$ and therefore misses some open momentum-space ball. Lemmas \ref{full_momentum_support} and \ref{ridge_momentum_lines} consequently show that a typical QFT draw has full Fourier support, whereas every finite-network draw fails to have this property. The two measures on field configurations therefore cannot agree.

Although this already establishes a disagreement between the full field-draw laws, our definition of a neural network realization is phrased in terms of correlation functions. It is therefore useful to see the obstruction directly in a finite correlator. Fix the proposed width $N$ and choose $N+1$ symmetric open momentum regions $V_k=-V_k$, with closures away from the origin, such that every line through the origin intersects at most one $V_k$. Such a choice is possible in every dimension $d\geq2$. For each $k$, choose a countable separating family of real test functions $f_{k,\ell}$ whose Fourier transforms have compact support in $V_k$.

By Lemma \ref{full_momentum_support}, the restriction of $\widehat{\phi}$ to every $V_k$ is non-zero almost surely. Therefore, for almost every QFT draw and each $k$, there is at least one $\ell$ for which $\phi(f_{k,\ell})\neq0$. The countable union, over tuples $(\ell_1,\ldots,\ell_{N+1})$, of the events
\begin{align}
    \bigcap_{k=1}^{N+1}
    \left\{
        \phi(f_{k,\ell_k})\neq0
    \right\}
    \, ,
\end{align}
has probability one. At least one fixed tuple must consequently have positive probability. Selecting this tuple and writing its test functions as $f_1,\ldots,f_{N+1}$, we find
\begin{align}\label{qft_sector_correlation}
    \mathbb{E}\left[
        \prod_{k=1}^{N+1}\phi(f_k)^2
    \right]
    >0 \, .
\end{align}
Here the expectation is finite by the usual OS regularity assumption for the corresponding smeared Schwinger function.

For a width-$N$ network, however, the $N$ lines in (\ref{finite_network_momentum_support}) can intersect at most $N$ of the $N+1$ regions $V_k$. Hence at least one of the smeared fields $\phi_\theta(f_k)$ vanishes for every parameter draw, although the identity of this field may depend on the draw. It follows that
\begin{align}\label{network_sector_correlation}
    \mathbb{E}\left[
        \prod_{k=1}^{N+1}\phi_\theta(f_k)^2
    \right]
    =0 \, .
\end{align}
Equations (\ref{qft_sector_correlation}) and (\ref{network_sector_correlation}) exhibit an explicit disagreement in a smeared $(2N+2)$-point function. This proves the theorem without any appeal to moment determinacy of the full field law.
\end{proof}

The momentum-space proof can be summarized in three steps: K\"all\'en-Lehmann positivity gives two-point fluctuations in every open momentum region; cluster decomposition forces every such region to occur in a typical draw; and a finite-$N$ architecture confines each draw to finitely many momentum lines. These statements are incompatible for $d\geq2$. Unlike the coincident-point argument, this conclusion does not require a finite-variance assumption on the parameter density.\footnote{It is, however, specific to finite single-layer networks and does not by itself extend to arbitrary deep architectures. We conjecture that similar results apply in that setting.} Nonetheless, Theorem \ref{thm:momentum_space_no_go} establishes that -- while the theory defined by a single-layer network (\ref{single_layer}) may be interesting -- at finite $N$, such a theory must necessarily give up one of the Osterwalder-Schrader axioms.

\subsection{Distributional Neurons}\label{subsec:distributional_neurons}

The preceding results illustrate a recurring theme of this article: what can be preserved at finite $N$ depends both on the physical property of interest and on the architecture used to represent the field. For instance, we saw that -- under a mild local finite-variance condition -- ordinary finite-width networks cannot reproduce the coincident-point behavior of two-point functions. However, sufficiently singular parameter statistics can evade that obstruction, as the width-$1$ construction (\ref{single_layer_example}) shows. For the single-layer architecture (\ref{single_layer}), however, imposing cluster decomposition produces a different obstruction in momentum space. It is therefore natural to ask whether the mismatch between finite networks and the OS properties of quantum fields can instead be avoided by changing the neurons themselves.

One might wonder whether it is possible to draw inspiration from the quantum-mechanical discussion of Section \ref{sec:qm}. There, a K\"all\'en--Lehmann representation of the covariance leads to a KKL expansion, which we repeat,
\begin{align}\label{kkl_repeated_later}
    x_t = \langle x ( t ) \rangle + \sum_{k=1}^{\infty} \theta_k e_k ( t ) \, , 
\end{align}
and that involves fixed deterministic neurons (or features) $e_k ( t )$ and random coefficients. Since a quantum field is generally a Schwartz distribution rather than an ordinary function, one might seek a QFT analogue in which the deterministic modes are themselves allowed to be elements of $\mathcal{S}'(\mathbb{R}^d)$. We will refer to these modes as \emph{distributional neurons}. Allowing such neurons removes the elementary regularity mismatch between the types of objects represented by random finite-$N$ neural networks (functions) and random field configurations (Schwartz distributions): now every network draw may itself be a proper Schwartz distribution. However, in spirit with the finite-$N$ focus of this article, we will again truncate the distributional sum inspired by (\ref{kkl_repeated_later}) to finite width. We ask only whether the natural finite, KKL-like truncation can reproduce the QFT covariance.

We will see that the answer is no. As the argument is structurally quite similar to that of Section \ref{nnqmDimCount}, we will present the obstruction only briefly. Consider a finite NN-FT representation of a field configuration
\begin{equation}\label{distributional_KKL}
    \Phi_{ \vartheta } = F_0\, +\, \sum_{ k = 1 }^{N} \vartheta_k F_k\,,
\end{equation}
where $ F_k \in \mathcal{S}' ( \mathbb{R}^d) $ are deterministic distributional neurons and $\vartheta_k$ are random variables. As before, each $F_k$ is a map from the infinite-dimensional space $\mathcal{S} ( \mathbb{R}^d )$ of test functions into $\mathbb{R}$ (or $\mathbb{C}$, for a complex scalar), and thus cannot be injective. In particular, each $F_k$ has an infinite-dimensional kernel, and there is a non-zero test function $f \in \mathcal{S} ( \mathbb{R}^d )$ which is annihilated by every distributional neuron: $F_k ( f ) = 0$ for all $0 \leq k \leq N$. Smearing (\ref{distributional_KKL}) against this test function then gives
\begin{align}
    \Phi_{ \vartheta } ( f ) = 0 \, ,
\end{align}
for every draw of the parameters $\vartheta$, and thus the variance of the random variable $\Phi_{ \vartheta } ( f )$ vanishes. But this is incompatible with a non-trivial K\"all\'en-Lehmann covariance, as we have already seen. Indeed, the positivity of the spectral measure in (\ref{main_kallen_lehmann}) gives
\begin{align}
    \mathbb{E} \left[ \overline{\Phi ( f )} \Phi ( f ) \right] = C(f,f)=\int_0^\infty \rho(d\mu^2)\int_{\mathbb{R}^d}\frac{d^dp}{(2\pi)^d}\frac{|\widehat f(p)|^2}{p^2+\mu^2}>0 \, ,
\end{align}
for all nonzero $f \in \mathcal{S} ( \mathbb{R}^d )$, as we have already seen around (\ref{positive_momentum_variance}).

We have therefore learned that distributional neurons remove the ``type-of-object'' mismatch, but not the finite-rank obstruction familiar from the NN-QM analysis. A finite collection of distributional features still leaves infinitely many non-zero directions in which the field is completely deterministic, whereas a field with non-trivial K\"all\'en-Lehmann representation fluctuates in every such direction. Consequently, no finite truncation of the KKL-type form (\ref{distributional_KKL}) can reproduce the correlation functions of a non-trivial QFT obeying temperedness, Euclidean invariance, and reflection positivity, since these three conditions imply the existence of such a K\"all\'en-Lehmann representation.

\section{Conclusion}\label{sec:conclusion}

In this article, we have studied features of neural network representations for quantum mechanical models and quantum field theories at finite width. The primary conclusion is that, for $N < \infty$, such neural network representations typically violate at least one of the standard assumptions in constructive formulations of quantum theories, such as reflection positivity or cluster decomposition. Morally, this violation stems from the fact that finite-$N$ architectures do not represent the objects on which typical path integral measures are supported. However, the analysis of this work has identified the specific hallmarks of conventional QFTs which are sacrificed by truncation to $N < \infty$.

There are at least two lessons that one might extract from this result. The first is that, in order to \emph{exactly} realize a conventional QM or QFT via a neural network representation, we must either take an infinite-width limit or use an architecture and parameter density which evade one of the assumptions of our theorems. The second is that theories defined by finite-$N$ neural networks may be interesting in their own right, as they are in some sense more ``exotic'' than familiar quantum models.

This work opens up several directions for future investigation. A natural question is whether our qualitative theorems can be promoted to quantitative bounds which characterize the degree to which one of the Osterwalder-Schrader axioms (such as reflection positivity) must be violated for a given architecture as a function of $N$. For instance, letting $F$ be a function of field insertions and $T$ be the reflection operator about a hyperplane, one might envision an inequality of the form
\begin{align}
    \left\langle F ( T F )^\ast \right\rangle \geq f ( N , F ) 
\end{align}
for some function $f$ depending upon the width of the architecture. Assuming that the architecture and parameter density under consideration reproduce a reflection positive QFT at infinite width, one has $\lim_{N \to \infty} f ( N , F ) = 0$ for every $F$, but it would be interesting to investigate ``how negative'' such a function $f$ could be at finite $N$ for realistic architectures such as a single-layer network. In Lorentzian language, such an inequality would characterize the degree to which unitarity is violated in a finite-width NN-QM or NN-FT. One might also consider a similar study of violations of cluster decomposition or Euclidean invariance.

A second direction is to classify which combinations of the Osterwalder-Schrader properties can be retained at finite $N$. For example, one may ask whether architectures can be constructed which preserve
Euclidean invariance and reflection positivity of all $n$-point functions exactly, but violate cluster decomposition. This would turn the present obstructions into a classification of finite-width theories according to the physical axiom that they sacrifice. It would also be interesting to characterize exceptional theories for which finite representations may exist, such as theories without local degrees of freedom and ``degenerate'' or topological theories.

A related question is whether one can formulate a more complexity-theoretic notion of a finite neural network realization. The existence of a formal one-parameter NN-FT representation establishes that parameter counting by itself cannot distinguish a genuine finite architecture such as (\ref{single_layer}) from an encoding which hides infinitely many operations in the digits of a real number. Rather than specifying the functional form of a finite-width architecture, one might instead impose more general requirements such as computability, bounded numerical precision, or explicit bounds on the number of permitted operations. Using such a definition, one might be able to demonstrate that any neural network representation with ``finite complexity'' -- in some suitable sense -- must also violate certain properties of standard QFTs, as in the examples considered in this work.

It would also be useful to extend the present analysis to theories which are \emph{not} defined by a real non-negative measure over field configurations, such as models with fermions or gauge theories with a theta angle. Quantum theories with a sign problem (i.e. a complex or sign-indefinite weight) fall outside the scope of the present investigation, although in principle their observables might still be computed from a neural network representation by splitting the measure into a positive-definite modulus along with a phase, and treating the phase as part of the observable. One might expect that finite-width neural network representations of theories with a sign problem must also violate certain physical assumptions, although this remains an open question.

Finally, it would be intriguing to investigate other properties of finite-width NN-FTs more systematically. Although we have established that such models necessarily lack typical properties such as Euclidean covariance, reflection positivity, or cluster decomposition, this does not mean that these theories are undeserving of study. For instance, translating to the Lorentzian versions of these properties, it is common to consider quantum field theories which lack Lorentz invariance (e.g. in condensed matter models) or unitarity (such as open quantum systems). Another important question is whether the non-local interactions observed in finite-$N$ truncations \cite{Demirtas:2023fir,Robinson:2025ybg} are universal, and whether their dependence on $N$ can be understood directly from the same spectral and infinite-dimensional arguments employed in this work. Answering these questions may help clarify both the limitations and the utility of neural networks as finite representations of continuum quantum systems.

\section*{Acknowledgements}

We are grateful to James Halverson and Thomas Harvey for fruitful discussions related to this work. We also acknowledge the language model ChatGPT-5.6 Sol, which was used in this research. The authors take full responsibility for the accuracy of this work, having reviewed, verified, and approved all AI-generated content. C.\,F. is supported by the National Science Foundation under Cooperative Agreement PHY-2019786 (the NSF AI Institute for Artificial Intelligence and Fundamental Interactions). A.\,M. is supported by the National Science Foundation grant PHY-2209903.

\appendix

\bibliographystyle{utphys}
\bibliography{master}
\end{document}